\newtheorem{thm}{Theorem}[section]
\newtheorem{ass}[thm]{Assumption}
\newtheorem{prop}[thm]{Proposition}
\theoremstyle{definition}
\newtheorem{definition}{Definition}[section]
\newtheorem{rem}[definition]{Remark}
\renewcommand{\Re}{\mathrm{Re}}
\newcommand{\ds}{\displaystyle}
\renewcommand{\S}{\mathcal{S}}
\newcommand{\Z}{\mathbb{Z}}
\numberwithin{equation}{section}
\numberwithin{figure}{section}
\title{Energy Balance and Optical Theorem for Time-Modulated Subwavelength Resonator Arrays}
\author{Erik Orvehed Hiltunen\thanks{\footnotesize Department of Mathematics, University of Oslo, Moltke Moes vei 35, 0851 Oslo, Norway (erikhilt@math.uio.no).} \and Liora Rueff\thanks{\footnotesize Department of Mathematics, ETH Z\"urich, R\"amistrasse 101, CH-8092 Z\"urich, Switzerland (liora.rueff@sam.math.ethz.ch).} }
\date{}
\begin{document}
	\maketitle
	\begin{abstract}
		We study wave propagation through a one-dimensional array of subwavelength resonators with periodically time-modulated material parameters. Focusing on a high-contrast regime, we use a scattering framework based on Fourier expansions and scattering matrix techniques to capture the interactions between an incident wave and the temporally varying system. This way, we derive a formulation of the total energy flux corresponding to time-dependent systems of resonators. We show that the total energy flux is composed of the transmitted and reflected energy fluxes, and derive an optical theorem which characterises the energy balance of the system. We provide a number of numerical experiments to investigate the impact of the time-dependency, the operating frequency and the number of resonators on the maximal attainable energy gain and energy loss. Moreover, we show the existence of lasing points, at which the total energy diverges. 
        Our results lay the foundation for the design of energy dissipative or energy amplifying systems.
	\end{abstract}
	\noindent{\textbf{Mathematics Subject Classification (MSC2000):} 35Q60, 35L05, 78A45, 78M35, 35P25} 
	\vspace{0.2cm}\\
	\noindent{\textbf{Keywords:} Subwavelength resonators, Time-modulated metamaterials, Energy flux, Energy Conservation, Scattering matrix, Energy gain and loss, Optical theorem} 
	
\section{Introduction}
    Recent advances in the design of time-dependent media, also known as Floquet metamaterials or dynamic metamaterials, have opened new avenues in wave manipulation and control. In particular, the study of resonator systems with temporally modulated material parameters has revealed promising mechanisms for non-reciprocal wave propagation \cite{jinghao_liora}, space-time wave localisation \cite{liora2024st_localisation} and enhanced scattering effects \cite{ammari2024scattering,hiltunen2024coupled}. Such systems are of interest in both physics and engineering, where precise control over wave transmission is desirable, for instance, in acoustic metamaterials and photonic devices.

    In this work, we consider a one-dimensional array of subwavelength resonators with periodically time-modulated parameters. Our aim is to analyse the resulting wave scattering phenomena, specifically its effect on the system's energy flux, in the high-contrast regime. We assume harmonic time-modulation of the material coefficients and employ a scattering formalism based on Fourier mode expansions and transfer matrix techniques building on previous work \cite{jinghao_liora, ammari2024scattering, ammari2025effectivemediumtheorytimemodulated}. This way, we develop a precise description of the total wave field and investigate the role of various parameters in shaping the energy response of the system. This study provides insight into the dynamic behaviour of time-modulated resonant structures and offers a mathematical framework for simulating and predicting their scattering characteristics. Our results have implications for the design of advanced wave-guiding devices with tunable or direction-dependent response.

    Besides a plethora of already proven effects arising from modulating the material parameters in time, we now prove how time dependency can induce energy gain or energy loss in a system. Understanding how energy evolves in dynamic media is crucial for advancing next-generation technologies in acoustics, optics and metamaterials. In particular, systems with temporally modulated material parameters can exhibit non-trivial energy exchange \cite{Milton_EnergyCons,hiltunen2025opticaltheoremgeneralizedenergy,liberal2024spatiotemporal,zhang2024conservation,nick2024wavescatteringtimeperiodiccoefficients}, enabling phenomena such as frequency conversion, amplification, dissipation and non-reciprocal transmission \cite{Cummer_SoundControl,FleurySounasAlu,1966_Holberg,ParAmplification_Koutserimpas}. These effects have direct applications in the design of acoustic diodes \cite{AcousticRectifier}, energy-harvesting structures \cite{AnisotropicMetamaterials}, wave-based computing components and time-varying photonic crystals \cite{TuringMachine}. Moreover, they support theoretical developments in wave localisation \cite{AbsenceofDiffusion} and the control of wave propagation in complex media \cite{AMMARI202017}. This work provides a rigorous analysis of energy conservation and flux balance in such systems, and identifies the mechanisms by which energy gain and loss can be engineered through modulations in time. Additionally, we present a framework in which energy conservation is guaranteed even under time-modulated material parameters. These results contribute both to the mathematical theory of wave propagation and to the practical design of tunable wave-control devices.

    Previous works, such as \cite{Milton_EnergyCons}, have already considered energy conservation in dynamic metamaterials; however, in this paper we shall consider time-modulations which are periodic in time, which induces a coupling between different frequency harmonics. We specifically consider resonator systems in the subwavelength regime, \textit{i.e.} the size of the resonators is much smaller than the operating wavelength, and assume the operating frequency to be of subwavelength order. This assumption distinguishes our results from \cite{hiltunen2025opticaltheoremgeneralizedenergy}. Moreover, we consider one-dimensional arrays of resonators which allows for more explicit calculations, since such systems only feature nearest-neighbour interactions \cite{ammari2024scattering, liora2024st_localisation}. In contrast to \cite{Milton_EnergyCons}, we aim to prove and numerically quantify energy gain and loss induced by time-dependent material parameters. Additionally, we derive explicit conditions which guarantee energy conservation despite a non-trivial temporal modulation. We show how a system of subwavelength resonators with fixed material parameters may lead to energy gain, loss or conservation solely based on the incident field illuminating the system. We choose a scattering matrix approach, which furnishes the definition of the reflected and transmitted wave field based on the incident wave field. This leads to an explicit definition of the transmitted and reflected fields, and hence total energy flux, allowing us to formulate an Optical Theorem similar to \cite{hiltunen2025opticaltheoremgeneralizedenergy}.

    We introduce the notion of maximal energy gain $P_{\mathrm{gain}}$ and maximal energy loss $P_{\mathrm{loss}}$ for a given set of parameters, where the only degree of freedom is the definition of the incident wave field. In this paper we look at how the energy gain and loss can be maximised  through the time-dependent material parameters and incident wave field's frequency. Furthermore, we assume the resonators to be small so that we can apply the asymptotic formulation of the scattering matrix, similarly to \cite{ammari2025effectivemediumtheorytimemodulated}. This sets the ground for characterising an incident wave field at which energy conservation of the system is guaranteed, regardless of the operating frequency and material parameters. These results open the door to many new design approaches of metamaterials retrieving or dissipating energy.

    We start by giving an introduction to the mathematical and physical setup of the structure in Section \ref{sec:Setting}. In Section \ref{subsec:ScattCoeffs} we define the reflected and transmitted wave field by deriving the scattering coefficients based on the scattering matrix, similarly to \cite{ammari2025effectivemediumtheorytimemodulated}. We then use this definition of the reflected and transmitted wave to derive the expression for the total energy flux, which is summarised in the Optical Theorem. Next, we investigate numerically the effect of the time-modulation, the operating frequency and the number of resonators on the total energy flux in Section \ref{sec:NumExp}. We also show the existence of lasing points in Section \ref{sec:lasingpts}, where the total energy flux diverges. Lastly, we summarise our results and conclusions in Section \ref{sec:Concl}.

\section{Problem setting}\label{sec:Setting}
    We consider a finite one-dimensional domain $\mathcal{U}$ consisting of an array of $N$ resonators $D_i:=(x_i^-,x_i^+)$ each of length $\ell_i:=x_i^+-x_i^-$, for all $i=1,\dots,N$. We denote the spacing between two resonators $D_i$ and $D_{j}$ by $\ell_{ij}:=x_{j}^--x_i^+$, for all $1\leq i<j\leq N$. We denote the area taken up by the resonators by the disjoint union $D:=\bigcup_{i=1}^ND_i$. We assume $D$ to be centred around the origin.\par 
    We denote the material parameters inside the resonator $D_i$ by $\kappa_{\mathrm{r}}\kappa_i(t)$ and $\rho_{\mathrm{r}}\rho_i(t)$. We write:
    \begin{align}
        \rho(x)=\begin{cases} 
        \rho_0, & x \notin {D}, \\
        \rho_{\mathrm{r}}\rho_i(t), & x \in D_i,
        \end{cases} \qquad \kappa(x,t)=\begin{cases} 
        \kappa_0, & x \notin {D}, \\
        \kappa_{\mathrm{r}} \kappa_i(t), & x \in D_i.
        \end{cases}
    \end{align}
    Additionally, we assume their inverse to have a finite Fourier series given by
    \begin{align}\label{def:inv_rhokappa}
        \frac{1}{\rho_i(t)}=\sum_{n=-M}^M r_{i, n} \mathrm{e}^{\mathrm{i} n \Omega t}, \quad \frac{1}{\kappa_i(t)}=\sum_{n=-M}^M k_{i, n} \mathrm{e}^{\mathrm{i} n \Omega t},\qquad\forall\,i=1,\dots,N.
    \end{align}
    Suppose that $\kappa_i$ and $\rho_i$ are periodic with frequency $\Omega$ and period $T:=2\pi/\Omega$. For our numerical simulations we shall consider
    \begin{equation}
        \rho_i(t):=\frac{1}{1+\varepsilon_{\rho,i}\cos\left(\Omega t+\phi_{\rho,i}\right)}, \quad\kappa_i(t):=\frac{1}{1+\varepsilon_{\kappa,i}\cos\left(\Omega t+\phi_{\kappa,i}\right)}, \label{eq:rho_kappa}
    \end{equation}
    for all $i=1,\dots,N$, where  $\varepsilon_{\rho,i},\,\varepsilon_{\kappa,i} \in [0,1)$ are the amplitudes of the time-modulations and $\phi_{\rho,i},\,\phi_{\kappa,i} \in [0,2\pi)$ the phase shifts. In this case we have $M=1$ and
    \begin{align}
		k_{i,-1}=\frac{\varepsilon_{\kappa,i}\mathrm{e}^{-\mathrm{i}\phi_{\kappa,i}}}{2},\quad k_{i,0}=1,\quad k_{i,1}=\frac{\varepsilon_{\kappa,i}\mathrm{e}^{\mathrm{i}\phi_{\kappa,i}}}{2},\label{def:kappa_Fcoeffs}\\
		r_{i,-1}=\frac{\varepsilon_{\rho,i}\mathrm{e}^{-\mathrm{i}\phi_{\rho,i}}}{2},\quad r_{i,0}=1,\quad r_{i,1}=\frac{\varepsilon_{\rho,i}\mathrm{e}^{\mathrm{i}\phi_{\rho,i}}}{2}.\label{def:rho_Fcoeffs}
	\end{align}
    As in \cite{ammari2024scattering}, we introduce the contrast parameter and the exterior and interior wave speeds
    \begin{equation} \label{defdelta}
        \delta:=\frac{\rho_{\mathrm{{r}}}}{\rho_0},\quad v_0:=\sqrt{\frac{\kappa_0}{\rho_0}},\quad v_{\mathrm{r}}:=\sqrt{\frac{\kappa_\mathrm{r}}{\rho_{\mathrm{r}}}},
    \end{equation}
    respectively. We assume the material to be highly contrasting, that is $\delta\ll1$, while the wave speed is of order one, \textit{i.e.} $v_0,\,v_{\mathrm{r}}=O(1)$.\par
    Let $u(x,t)$ be the total wave field propagating through the one-dimensional, time-dependent system of resonators. Note that $u$ can be divided into its scattered and incident component:
    \begin{align}\label{eq:u_total}
        u(x,t)=\begin{cases}
            u^{\mathrm{sc}}(x,t)+u^{\mathrm{in}}(x,t),&x\notin D,\\
            u^{\mathrm{sc}}(x,t),&x\in D,
        \end{cases}
    \end{align}
    where $u^{\mathrm{sc}}$ is outwards-propagating.
    
    Suppose that the incident wave impinging on the system of resonators has frequency $\omega=O(\delta^{1/2})$. Thus, we can write
    \begin{align}
        u^{\mathrm{sc}}(x,t)=\sum\limits_{n=-\infty}^{\infty}v_n^{\mathrm{sc}}(x)\mathrm{e}^{\mathrm{i}\left(\omega+n\Omega\right)t}
    \end{align}
    and $u$ is governed by the wave equation
    \begin{align}\label{eq:WaveEq}
        \frac{\partial}{\partial t}\frac{1}{\kappa(x,t)}\frac{\partial}{\partial t}u(x,t)-\frac{\partial}{\partial x} \frac{1}{\rho(x,t)}\frac{\partial}{\partial x} u(x,t)=0.
    \end{align}
    We assume the incident wave to impinge on the system of resonators from the left-hand side and to be given by
    \begin{align}\label{def:inc_wave}
        u^{\mathrm{in}}(x,t)=\begin{cases}
        \ds \sum\limits_{n=-\infty}^{\infty}a_n\mathrm{e}^{\mathrm{i}\left(k^{(n)}x+(\omega+n\Omega)t\right)}, & x< x_1^-,\\
        0, & \text{otherwise},\end{cases}
    \end{align}
    where $a_n\in\mathbb{C}$ for all $n\in\mathbb{Z}$, and $k^{(n)}:=(\omega+n\Omega)/v_0$ is the exterior wave number corresponding to each mode $n\in\mathbb{Z}$.
    
\subsection{Subwavelength resonance}
    Previous work \cite{jinghao_liora} showed the existence of subwavelength resonant frequencies for systems as considered herein. The resonant frequencies corresponding to \eqref{eq:WaveEq} are the frequencies $\omega_0$, for which there exists a non-trivial solution $u$ with a corresponding eigenmode that is essentially supported in the subwavelength frequency regime. Using the capacitance matrix \cite{jinghao_liora}, the subwavelength resonant frequencies corresponding to \eqref{eq:WaveEq} can be approximated, to order $O(\delta^{3/2})$, by the Floquet exponents $\omega_0$ of the following system of ordinary differential equations (ODEs) \cite[Theorem 3.1]{ammari2024scattering}:
    \begin{align}\label{eq:new_CapApprox}
        C\mathbf{u}(t) +\frac{1}{v_0}D\frac{\mathrm{d}}{\mathrm{d}t}\mathbf{u}(t) = -L\frac{ \rho_{\mathrm{r}}}{\delta \kappa_{\mathrm{r}}} \frac{\mathrm{d}}{\mathrm{d} t}\left(K(t)\frac{\mathrm{d} }{\mathrm{d} t}\mathbf{u}(t)\right),
    \end{align}
    where $L=\mathrm{diag}\left(\{\ell_i\}_{i=1,\dots,N}\right)$, $K(t)=\mathrm{diag}\left(\{\frac{1}{\kappa_i(t)}\}_{i=1,\dots,N}\right)$ and $C$ denotes the finite capacitance matrix whose entries are given by \cite{jinghao-silvio2023}
    \begin{align}\label{def:CapMat}
        C_{ij}=-\frac{1}{\ell_{(j-1)j}} \delta_{i(j-1)}+\left(\frac{1}{\ell_{(j-1)j}}+\frac{1}{\ell_{j(j+1)}}\right) \delta_{i j}-\frac{1}{\ell_{j(j+1)}} \delta_{i(j+1)},\quad\forall\,i,j=1,\dots,N.
    \end{align}
    The only non-zero entries of the $N\times N$ diagonal matrix $D$ are given by
    \begin{align*}
    \begin{cases}
        D_{11} = D_{NN} = 1, \quad & \text{if} \ N\geq 2,\\
        D_{11} = 2, & \text{if} \ N = 1.
    \end{cases}
    \end{align*}
    The resonant frequencies of the ODE \eqref{eq:new_CapApprox} can be computed by using a spectral method to convert the ODE into a matrix eigenvalue problem \cite{jinghao_liora}.

\subsection{Scattering coefficients}\label{subsec:ScattCoeffs}
    As in previous work \cite{jinghao_liora}, we use an exponential Ansatz so that each Fourier mode is defined through
    \begin{align}
        v_n^{\mathrm{sc}}(x)=\begin{cases}
            \alpha_n^i\mathrm{e}^{\mathrm{i}k^{(n)}x}+\beta_n^i\mathrm{e}^{-\mathrm{i}k^{(n)}x},&\forall\,x\in\left(x_{i-1}^+,x_{i}^-\right),\\
            \sum\limits_{j=-\infty}^{\infty}\left(a_j^i\mathrm{e}^{\mathrm{i}\lambda_j^ix}+b_j^i\mathrm{e}^{-\mathrm{i}\lambda_j^ix}\right)f_n^{j,i},&\forall\,x\in\left(x_i^-,x_i^+\right),
        \end{cases}
    \end{align}
    where the coefficients $\alpha_n^i,\,\beta_n^i,\,a_j^i,\,b_j^i$ need to be determined for all $i=1,\dots,N,\,j,n\in\mathbb{Z}$. The eigenpairs $\left(\lambda_j^i,\mathbf{f}^{j,i}\right)_{i\in\mathbb{Z}}$ corresponding to the $i$-th resonator can be obtained as stated in \cite[Lemma 2.1]{ammari2024scattering}. For the analytic part of this paper we will work with infinite matrices and truncate them further along for our numerical experiments in Section \ref{sec:NumExp}.
    
    We now follow the same approach as in \cite{ammari2025effectivemediumtheorytimemodulated} to define the scattered wave field through the scattering matrix.
    \begin{prop}
        Consider the resonator $D_i$, and assume that $\omega \neq m\Omega$ for all $m\in \Z$. Let the wave field on the left-hand side of $D_i$ be given by the modes $v_n(x)=\alpha^i_n\mathrm{e}^{\mathrm{i}k^{(n)}x}+\beta^i_n\mathrm{e}^{-\mathrm{i}k^{(n)}x}$ and the wave field on the right-hand side of $D_i$ be given by the modes $v_n(x)=\alpha^{i+1}_n\mathrm{e}^{\mathrm{i}k^{(n)}x}+\beta^{i+1}_n\mathrm{e}^{-\mathrm{i}k^{(n)}x}$. Then the corresponding transfer matrix is given by
        \begin{align}
            \begin{bmatrix}
                \boldsymbol{\alpha}^{i+1}\\\boldsymbol{\beta}^{i+1}
            \end{bmatrix}=\Tilde{S}^i\begin{bmatrix}
                \boldsymbol{\alpha}^{i}\\\boldsymbol{\beta}^{i}
            \end{bmatrix},\quad \Tilde{S}^i=:\begin{bmatrix}
                \Tilde{S}^i_{11} & \Tilde{S}^i_{12} \\
                \Tilde{S}^i_{21} & \Tilde{S}^i_{22}
            \end{bmatrix},
        \end{align}
        where these matrices are defined in \cite[(4.6)]{ammari2025effectivemediumtheorytimemodulated}. Furthermore, the scattering matrix is defined by
        \begin{align}\label{def:Si_tdep}
            S^i:=\begin{bmatrix}
                S^i_{11} & S^i_{12} \\ S^i_{21} & S^i_{22}
            \end{bmatrix},\quad \begin{cases}
                S^i_{11}:=\Tilde{S}^i_{11}-\Tilde{S}^i_{12}\left(\Tilde{S}^i_{22}\right)^{-1}\Tilde{S}^i_{21},\\
                S^i_{12}:=\Tilde{S}^i_{12}\left(\Tilde{S}^i_{22}\right)^{-1},\\
                S^i_{21}:=-\left(\Tilde{S}^i_{22}\right)^{-1}\Tilde{S}^i_{21},\\
                S^i_{22}:=\left(\Tilde{S}^i_{22}\right)^{-1},
            \end{cases}
        \end{align}
        and it satisfies
        \begin{align}\label{eq:Srelation}
            \begin{bmatrix}
                \boldsymbol{\alpha}^{i+1}\\\boldsymbol{\beta}^i
            \end{bmatrix}=S^i\begin{bmatrix}
                \boldsymbol{\alpha}^i\\\boldsymbol{\beta}^{i+1}
            \end{bmatrix}.
        \end{align}
    \end{prop}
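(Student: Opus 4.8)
The plan is to derive the scattering relation \eqref{eq:Srelation} from the given transfer relation by a standard block-matrix elimination. The two relations involve the same four mode-vectors $\boldsymbol{\alpha}^i,\boldsymbol{\beta}^i,\boldsymbol{\alpha}^{i+1},\boldsymbol{\beta}^{i+1}$ but group them differently: the transfer matrix $\Tilde{S}^i$ maps the left-hand data $(\boldsymbol{\alpha}^i,\boldsymbol{\beta}^i)$ to the right-hand data $(\boldsymbol{\alpha}^{i+1},\boldsymbol{\beta}^{i+1})$, whereas the scattering matrix $S^i$ must map the \emph{incoming} data $(\boldsymbol{\alpha}^i,\boldsymbol{\beta}^{i+1})$ to the \emph{outgoing} data $(\boldsymbol{\alpha}^{i+1},\boldsymbol{\beta}^i)$. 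Thus the task is purely algebraic: re-solve the transfer system so that the outgoing amplitudes are expressed in terms of the incoming ones.

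First I would expand the transfer relation into the two coupled block equations
\begin{align*}
    \boldsymbol{\alpha}^{i+1} &= \Tilde{S}^i_{11}\boldsymbol{\alpha}^i + \Tilde{S}^i_{12}\boldsymbol{\beta}^i,\\
    \boldsymbol{\beta}^{i+1} &= \Tilde{S}^i_{21}\boldsymbol{\alpha}^i + \Tilde{S}^i_{22}\boldsymbol{\beta}^i.
\end{align*}
Next I would solve the second equation for $\boldsymbol{\beta}^i$, which requires inverting the block $\Tilde{S}^i_{22}$ and yields $\boldsymbol{\beta}^i = -(\Tilde{S}^i_{22})^{-1}\Tilde{S}^i_{21}\boldsymbol{\alpha}^i + (\Tilde{S}^i_{22})^{-1}\boldsymbol{\beta}^{i+1}$; this already reproduces the stated formulas for $S^i_{21}$ and $S^i_{22}$ in \eqref{def:Si_tdep}. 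Substituting this expression for $\boldsymbol{\beta}^i$ into the first block equation and collecting the coefficients of $\boldsymbol{\alpha}^i$ and $\boldsymbol{\beta}^{i+1}$ gives $\boldsymbol{\alpha}^{i+1} = [\Tilde{S}^i_{11} - \Tilde{S}^i_{12}(\Tilde{S}^i_{22})^{-1}\Tilde{S}^i_{21}]\boldsymbol{\alpha}^i + \Tilde{S}^i_{12}(\Tilde{S}^i_{22})^{-1}\boldsymbol{\beta}^{i+1}$, matching the definitions of $S^i_{11}$ and $S^i_{12}$. Reassembling the two resulting identities into a single block matrix then produces exactly \eqref{eq:Srelation}.

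The one non-routine point, which I expect to be the main obstacle, is the invertibility of $\Tilde{S}^i_{22}$. Here we are manipulating infinite matrices acting on mode-indexed sequences rather than finite matrices, so solving for $\boldsymbol{\beta}^i$ is not automatic. The hypothesis $\omega \neq m\Omega$ for all $m\in\Z$ is precisely what guarantees that every exterior wave number $k^{(n)} = (\omega + n\Omega)/v_0$ is nonzero, so that the incoming/outgoing splitting $\mathrm{e}^{\pm\iu k^{(n)}x}$ is non-degenerate for each mode and the block $\Tilde{S}^i_{22}$ is well-defined; I would invoke the explicit construction of $\Tilde{S}^i$ from \cite[(4.6)]{ammari2025effectivemediumtheorytimemodulated} to justify that it is invertible. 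I would also note that after truncation to finitely many Fourier modes in Section \ref{sec:NumExp}, invertibility of $\Tilde{S}^i_{22}$ reduces to a generic non-degeneracy condition on a finite matrix, so that the block-elimination formulas hold verbatim in the numerically implemented setting.
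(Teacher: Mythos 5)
Your proof is correct, but it covers a complementary half of the proposition compared with the paper's own proof. The paper's proof consists of displaying the explicit operator form of the four blocks of $\Tilde{S}^i$ (in terms of the matrices $E_{\pm,\mathrm{l}}$, $E_{\pm,\mathrm{r}}$, $F$, $G$, $K$, $T^\pm$, $T^\pm_\lambda$, i.e.\ the analytic content coming from matching the wave field and its flux across the resonator boundaries) and then defers everything else, including the passage from transfer form to scattering form, to Proposition 4.1 of the cited reference. You do the opposite: you take the transfer matrix as given (legitimately, since the statement defines it by citation) and carry out explicitly the block elimination that the paper leaves implicit. Your algebra is exactly right: solving the second block equation for $\boldsymbol{\beta}^i$ produces $S^i_{21}=-\bigl(\Tilde{S}^i_{22}\bigr)^{-1}\Tilde{S}^i_{21}$ and $S^i_{22}=\bigl(\Tilde{S}^i_{22}\bigr)^{-1}$, and substitution into the first block equation produces the Schur-complement expressions for $S^i_{11}$ and $S^i_{12}$, which reassemble into \eqref{eq:Srelation}. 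You are also right to single out invertibility of $\Tilde{S}^i_{22}$ as the only non-routine point in this infinite-dimensional setting, and your handling of it --- the hypothesis $\omega\neq m\Omega$ for all $m\in\Z$ forces $\omega+n\Omega\neq0$, hence $k^{(n)}\neq0$ and invertibility of $K=\mathrm{diag}\bigl(k^{(n)}\bigr)$ entering the explicit blocks, with the rest deferred to the citation --- is at the same level of rigour as the paper itself, which likewise relegates this to the reference. What your route buys is that the reader sees why the formulas in \eqref{def:Si_tdep} take the form they do; what the paper's route buys is the explicit block formulas, which are what is actually used downstream in the asymptotic analysis of Theorem \ref{thm:asymptoticS}.
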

    \begin{proof}
        It can be shown that the four blocks of $\Tilde{S}^i$ are explicitly given by
        {\small
        \begin{align*}
            &\Tilde{S}_{11}=\frac{1}{2}\left(\left(E_{-,\mathrm{r}}F(T^+\hat{T})+\frac{1}{\delta}E_{-,\mathrm{r}}K^{-1}F(T^+_{\lambda}\hat{T})\right)GE_{+,\mathrm{l}}+\left(\delta E_{-,\mathrm{r}}F(T^+\hat{T}_{\lambda})+E_{-,\mathrm{r}}K^{-1}F(T^+_{\lambda}\hat{T}_{\lambda})\right)GKE_{+,\mathrm{l}}\right),\\
            &\Tilde{S}_{12}=\frac{1}{2}\left(\left(E_{-,\mathrm{r}}F(T^+\hat{T})+\frac{1}{\delta}E_{-,\mathrm{r}}K^{-1}F(T^+_{\lambda}\hat{T})\right)GE_{-,\mathrm{l}}-\left(\delta E_{-,\mathrm{r}}F(T^+\hat{T}_{\lambda})+E_{-,\mathrm{r}}K^{-1}F(T^+_{\lambda}\hat{T}_{\lambda})\right)GKE_{-,\mathrm{l}}\right),\\
            &\Tilde{S}_{21}=\frac{1}{2}\left(\left(E_{+,\mathrm{r}}F(T^+\hat{T})-\frac{1}{\delta}E_{+,\mathrm{r}}K^{-1}F(T^+_{\lambda}\hat{T})\right)GE_{+,\mathrm{l}}+\left(\delta E_{+,\mathrm{r}}F(T^+\hat{T}_{\lambda})-E_{+,\mathrm{r}}K^{-1}F(T^+_{\lambda}\hat{T}_{\lambda})\right)GKE_{+,\mathrm{l}}\right),\\
            &\Tilde{S}_{22}=\frac{1}{2}\left(\left(E_{+,\mathrm{r}}F(T^+\hat{T})-\frac{1}{\delta}E_{+,\mathrm{r}}K^{-1}F(T^+_{\lambda}\hat{T})\right)GE_{-,\mathrm{l}}-\left(\delta E_{+,\mathrm{r}}F(T^+\hat{T}_{\lambda})-E_{+,\mathrm{r}}K^{-1}F(T^+_{\lambda}\hat{T}_{\lambda})\right)GKE_{-,\mathrm{l}}\right),
        \end{align*}}
        where
        \begin{align*}
            &E_{\pm,\mathrm{l}}:=\mathrm{diag}\left(\mathrm{e}^{\pm\mathrm{i}k^{(n)}x^-}\right)_{n=-\infty}^{\infty},\,E_{\pm,\mathrm{r}}:=\mathrm{diag}\left(\mathrm{e}^{\pm\mathrm{i}k^{(n)}x^+}\right)_{n=-\infty}^{\infty},\,F:=\left(f_n^{j,i}\right)_{n,j=-\infty}^\infty,\, G:=F^{-1}, \\ 
            &K:=\mathrm{diag}\left(k^{(n)}\right)_{n=-\infty}^{\infty},\,            T^{\pm}:=\mathrm{diag}\left(\left[\mathrm{e}^{\mathrm{i}\lambda_j^ix^{\pm}},\,\mathrm{e}^{-\mathrm{i}\lambda_j^ix^{\pm}}\right]\right)_{j=-\infty}^{\infty},\, T^{\pm}_{\lambda}:=\mathrm{diag}\left(\left[\lambda_j^i\mathrm{e}^{\mathrm{i}\lambda_j^ix^{\pm}},\,-\lambda_j^i\mathrm{e}^{-\mathrm{i}\lambda_j^ix^{\pm}}\right]\right)_{j=-\infty}^{\infty}.
        \end{align*}
        See \cite[Proposition 4.1]{ammari2025effectivemediumtheorytimemodulated} for the complete proof.
    \end{proof}
    For the definition of the total energy flux we need to define the wave field left and right of the system of $N$ resonators. For that we define the total scattering matrix through the total transfer matrix given by the matrix product
    \begin{align}\label{eq:totalS}
        \Tilde{S}:=\Tilde{S}^N\cdots\Tilde{S}^1,\quad \begin{bmatrix}
                \boldsymbol{\alpha}^{N+1}\\\boldsymbol{\beta}^{N+1}
            \end{bmatrix}=\Tilde{S}\begin{bmatrix}
                \boldsymbol{\alpha}^{1}\\\boldsymbol{\beta}^{1}
            \end{bmatrix}
    \end{align}
    and, thus, the total scattering matrix is given by
    \begin{align}\label{def:totalS}
        \begin{bmatrix}
                \boldsymbol{\alpha}^{N+1}\\\boldsymbol{\beta}^1
            \end{bmatrix}=S\begin{bmatrix}
                \boldsymbol{\alpha}^1\\\boldsymbol{\beta}^{N+1}
            \end{bmatrix},\quad S:=\begin{bmatrix}
            S_{11}&S_{12}\\S_{21}&S_{22}
        \end{bmatrix},\quad \begin{cases}
                S_{11}:=\Tilde{S}_{11}-\Tilde{S}_{12}\left(\Tilde{S}_{22}\right)^{-1}\Tilde{S}_{21},\\
                S_{12}:=\Tilde{S}_{12}\left(\Tilde{S}_{22}\right)^{-1},\\
                S_{21}:=-\left(\Tilde{S}_{22}\right)^{-1}\Tilde{S}_{21},\\
                S_{22}:=\left(\Tilde{S}_{22}\right)^{-1}.
            \end{cases}
    \end{align}
    We now aim to define the reflected and transmitted wave field of the system of resonators in terms of the incident wave field \eqref{def:inc_wave}, using the definition of the total scattering matrix. We can define the Fourier modes of the scattered wave field left and right of the system of resonators by
    \begin{align}
        v_n^{\mathrm{sc}}(x):=\begin{cases}
            \beta^1_n\mathrm{e}^{-\mathrm{i}k^{(n)}x},&x<x_1^-,\\
            \alpha_n^{N+1}\mathrm{e}^{\mathrm{i}k^{(n)}x},&x>x_N^+.
        \end{cases}
    \end{align}
    In the sequel, we will simplify notation and write $\boldsymbol{\alpha}:= \boldsymbol{\alpha}^1$ and $\boldsymbol{\beta}:=\boldsymbol{\beta}^{N+1}$, while the coefficients $\boldsymbol{\beta}^1:=\left[\beta^1_n\right]_{n=-\infty}^\infty$ and $\boldsymbol{\alpha}^{N+1}:=\left[\alpha^{N+1}_n\right]_{n=-\infty}^\infty$ are to be determined. Based on \eqref{def:inc_wave}, and assuming that the incident field impinges on the resonators from the left-hand side, we have 
    \begin{equation}\label{eq:incident}
    \boldsymbol{\alpha}:=\left[a_n\right]_{n=-\infty}^\infty \quad \text{and} \quad \boldsymbol{\beta}:=\boldsymbol{0}.\end{equation}
    Thus, using the notion of the total scattering matrix, we obtain
    \begin{align}
        \begin{cases}
            \boldsymbol{\alpha}^{N+1}=S_{11}\boldsymbol{\alpha},\\
            \boldsymbol{\beta}^1=S_{21}\boldsymbol{\alpha},
        \end{cases}
    \end{align}
    for a system with a rightwards propagating incident wave field. Therefore, the scattered wave field can be separated into reflected and transmitted components as follows,
    \begin{align}\label{eq:u_sc}
        u^{\mathrm{sc}}(x,t):=\begin{cases}
            u^{\mathrm{ref}}(x,t),&x<x_1^-,\\
            u^{\mathrm{tr}}(x,t),&x>x_N^+,
        \end{cases}
    \end{align}
    where
    \begin{align}
        u^{\mathrm{ref}}(x,t)=\sum\limits_{n=-\infty}^\infty\left(\sum\limits_{m=-\infty}^\infty\left(S_{11}\right)_{nm}a_m\right)\mathrm{e}^{\mathrm{i}\left(-k^{(n)}x+\left(\omega+n\Omega\right)t\right)},\label{def:ref_wave_nonasympt}\\ u^{\mathrm{tr}}(x,t)=\sum\limits_{n=-\infty}^\infty\left(\sum\limits_{m=-\infty}^\infty\left(S_{21}\right)_{nm}a_m\right)\mathrm{e}^{\mathrm{i}\left(k^{(n)}x+\left(\omega+n\Omega\right)t\right)}.\label{def:tran_wave_nonasympt}
    \end{align}

\section{Optical theorem}
    We shall now define the energy flux corresponding to the wave equation \eqref{eq:WaveEq}. Since the scattering domain is bounded, we may phrase the energy balance in terms of the inward and outward flux in the exterior region of the resonators. 
    \subsection{Total energy flux}
    Firstly, we define the energy density $E(x,t)$, which is a sum of kinetic energy and potential energy terms as
    \begin{align}
        E(x,t):=\frac{\rho_0}{2}\left(\frac{1}{\kappa(x,t)}\left|\frac{\partial u(x,t)}{\partial t}\right|^2+\frac{1}{\rho(x,t)}\left|\frac{\partial u(x,t)}{\partial x}\right|^2\right).
    \end{align}
    Here, we include the normalisation factor $\rho_0$ for later convenience. Let $I\subset\mathcal{U}$ be an open interval, then the energy over $I$ is given by
    \begin{align}
        E_I(t):=\int_I E(x,t)\,\mathrm{d}x.
    \end{align}
    We can then define the rate of change in energy over $I$ as
    \begin{align}
        P_I(t):=&\frac{\partial E_{I}}{\partial t}\nonumber\\
        =&\frac{\rho_0}{2}\int_I\left(-\frac{1}{\kappa(x,t)^2}\frac{\partial\kappa(x,t)}{\partial t}\left|\frac{\partial u(x,t)}{\partial t}\right|^2+\frac{2}{\kappa(x,t)}\Re\left(\frac{\partial \overline{u}(x,t)}{\partial t}\frac{\partial^2u(x,t)}{\partial t^2}\right)\right.\nonumber\\
        &\qquad\qquad-\left.\frac{1}{\rho(x,t)^2}\frac{\partial\rho(x,t)}{\partial t}\left|\frac{\partial u(x,t)}{\partial x}\right|^2+\frac{2}{\rho(x,t)}\Re\left(\frac{\partial \overline{u}(x,t)}{\partial x}\frac{\partial^2 u(x,t)}{\partial x\partial t}\right)\right)\,\mathrm{d}x.
    \end{align}
    Note that, in the exterior domain where the material parameters are static, the energy flux simplifies to
    \begin{align}
        P_{(a,b)}(t)=\left.\Re\left(\frac{\partial\Bar{u}(x,t)}{\partial t}\frac{\partial u(x,t)}{\partial x}\right)\right|_b^a.
    \end{align}
    At a given point $x$ outside of the resonators, we can then define the (rightward) energy flux $P(x,t)$ as    
    \begin{align}\label{def:Fstaticxt}
        P(x,t)=-\Re\left(\frac{\partial\Bar{u}(x,t)}{\partial t}\frac{\partial u(x,t)}{\partial x}\right).
    \end{align}
    Averaging over one time-period, we obtain
    \begin{align}\label{def:Fstatic}
        P(x):=\frac{1}{T}\int_0^TP(x,t)\,\mathrm{d}t.
    \end{align}
    For a rightward incident field given by \eqref{eq:incident}, we have 
    \begin{equation}
    P(x) = \begin{cases}
    P^\mathrm{in} - P^\mathrm{ref} + P^\mathrm{cross}, \quad & x< x_1^-, \\
    P^\mathrm{tr},& x> x_N^+.
    \end{cases}
    \end{equation}
    Here, $P^{\mathrm{in}}$, $P^{\mathrm{tr}}$ and $P^{\mathrm{ref}}$ are defined as in \eqref{def:Fstatic} but with $u$ replaced, respectively, by $u^\mathrm{in}$, $u^{\mathrm{tr}}$ and $u^{\mathrm{ref}}$. Moreover, $P^\mathrm{cross}$ is a cross-term resulting since $P$ depends quadratically on $u$. In  Appendix \ref{app:calc}, we show that $P^\mathrm{cross} = 0$.    Moreover, by a direct calculation, the incident, transmitted and reflected energy fluxes are given by
    \begin{align}
        P^\mathrm{in} = &||\mathcal{O}\boldsymbol{\alpha}||^2, \\  
        P^{\mathrm{tr}}=&\frac{1}{T}\int_0^T\left|\sum\limits_{n=-\infty}^\infty\sum\limits_{m=-\infty}^\infty\left(S_{21}\right)_{nm}a_m\left(\omega+n\Omega\right)\mathrm{e}^{-\mathrm{i}\frac{n\Omega}{v}(x+vt)}\right|^2\,\mathrm{d}t\nonumber\\
        =&\sum\limits_{n=-\infty}^\infty\left|\sum\limits_{m=-\infty}^\infty\left(S_{21}\right)_{nm}a_m\right|^2\left|\omega+n\Omega\right|^2\nonumber\\
        =&||\mathcal{O}S_{21}\boldsymbol{\alpha}||^2,\label{def:ptr}\\[10pt]
        P^{\mathrm{ref}}=&\frac{1}{T}\int_0^T\left|\sum\limits_{n=-\infty}^\infty\sum\limits_{m=-\infty}^\infty\left(S_{11}\right)_{nm}a_m\left(\omega+n\Omega\right)\mathrm{e}^{-\mathrm{i}\frac{n\Omega}{v}(x-vt)}\right|^2\,\mathrm{d}t\nonumber\\
        =&\sum\limits_{n=-\infty}^\infty\left|\sum\limits_{m=-\infty}^\infty\left(S_{11}\right)_{nm}a_m\right|^2\left|\omega+n\Omega\right|^2\nonumber\\
        &=||\mathcal{O}S_{11}\boldsymbol{\alpha}||^2,\label{def:pref}
    \end{align}
    where $\left|\left|\cdot\right|\right|$ denotes the $\ell^2$-norm and $\mathcal{O}:=\mathrm{diag}\left(\omega+n\Omega\right)_{n=-\infty}^\infty$.

    By comparing the energy flux on the left and right sides of the scattering domain, we can now determine the energy balance of the system. This now allows us to formulate the following \textit{Optical Theorem} for one-dimensional, time-dependent systems of subwavelength resonators.
    \begin{thm}[Optical Theorem]\label{thm:Optical}
        We consider a finite, time-periodic resonator system with scattering matrix given by \eqref{eq:totalS}, and a rightward incident field given by \eqref{eq:incident}.
        Then, the transmitted and reflected energy fluxes are given by
        \begin{align}
            P^{\mathrm{tr}}=||\mathcal{O}S_{21}\boldsymbol{\alpha}||^2,\quad P^{\mathrm{ref}}=||\mathcal{O}S_{11}\boldsymbol{\alpha}||^2,
        \end{align}
        respectively, where $\mathcal{O}:=\mathrm{diag}\left(\omega+n\Omega\right)_{n=-\infty}^\infty$. Moreover, the total energy flux is given by
        \begin{align}P^\mathrm{tot}:=P^{\mathrm{tr}}+P^{\mathrm{ref}}=||\mathcal{O}S_{11}\boldsymbol{\alpha}||^2+||\mathcal{O}S_{21}\boldsymbol{\alpha}||^2.
        \end{align}
        We define three possible energy states:
        \begin{itemize}
            \item Energy gain: $P^\mathrm{tot}>||\mathcal{O}\boldsymbol{\alpha}||^2$;
            \item Energy conservation: $P^\mathrm{tot}=||\mathcal{O}\boldsymbol{\alpha}||^2$;
            \item Energy loss: $P^\mathrm{tot}<||\mathcal{O}\boldsymbol{\alpha}||^2$.
        \end{itemize}
    \end{thm}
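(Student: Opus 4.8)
The plan is to evaluate the time-averaged flux \eqref{def:Fstatic} directly in the two static exterior regions, where \eqref{def:Fstaticxt} applies because the material parameters are constant there. To the right of the array the field is the transmitted wave \eqref{def:tran_wave_nonasympt} alone, whereas to the left it is the superposition $u^{\mathrm{in}}+u^{\mathrm{ref}}$ of \eqref{def:inc_wave} and \eqref{def:ref_wave_nonasympt}. Since $P(x,t)=-\Re(\partial_t\overline{u}\,\partial_x u)$ is quadratic in $u$, inserting these Fourier expansions produces double sums over mode indices $m,n$, in which $\partial_t$ contributes a factor $\mathrm{i}(\omega+n\Omega)$, $\partial_x$ contributes $\pm\mathrm{i}k^{(n)}$ (the sign depending on whether the spatial factor is the rightward $\mathrm{e}^{+\mathrm{i}k^{(n)}x}$ or the leftward $\mathrm{e}^{-\mathrm{i}k^{(n)}x}$), and the time dependence of each summand is $\mathrm{e}^{\mathrm{i}(n-m)\Omega t}$.

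The first step is to average over one period $T=2\pi/\Omega$. By orthogonality, $\frac{1}{T}\int_0^T\mathrm{e}^{\mathrm{i}(n-m)\Omega t}\,\mathrm{d}t=\delta_{nm}$, so every double sum collapses to its diagonal $m=n$, and each surviving term carries a factor $(\omega+n\Omega)^2$ once $k^{(n)}=(\omega+n\Omega)/v_0$ and the normalisation in \eqref{def:Fstaticxt} are accounted for. For the transmitted field this produces $\sum_n(\omega+n\Omega)^2\big|\sum_m(S_{21})_{nm}a_m\big|^2$, which is precisely $\|\mathcal{O}S_{21}\boldsymbol{\alpha}\|^2$ after factoring out $\mathcal{O}=\mathrm{diag}(\omega+n\Omega)$; the identical computation with $S_{11}$ and the leftward spatial factor gives $P^{\mathrm{ref}}=\|\mathcal{O}S_{11}\boldsymbol{\alpha}\|^2$, and with $\boldsymbol{\alpha}$ itself gives $P^{\mathrm{in}}=\|\mathcal{O}\boldsymbol{\alpha}\|^2$. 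These are the two flux identities in the statement, already recorded in \eqref{def:ptr}--\eqref{def:pref}.

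I expect the crux to be the vanishing of the cross-term $P^{\mathrm{cross}}$ that appears on the left from the mixed products $\partial_t\overline{u^{\mathrm{in}}}\,\partial_x u^{\mathrm{ref}}$ and $\partial_t\overline{u^{\mathrm{ref}}}\,\partial_x u^{\mathrm{in}}$. After time-averaging these again reduce to the diagonal but, unlike the pure terms, retain an $x$-dependent factor $\mathrm{e}^{\mp2\mathrm{i}k^{(n)}x}$; writing $z_n:=(\omega+n\Omega)k^{(n)}\overline{a_n}\big(\sum_m(S_{11})_{nm}a_m\big)\mathrm{e}^{-2\mathrm{i}k^{(n)}x}$, the two contributions are $-\sum_n z_n$ and $+\sum_n\overline{z_n}$, the relative sign being exactly the $+\mathrm{i}k^{(n)}$ versus $-\mathrm{i}k^{(n)}$ produced by $\partial_x$ on the rightward incident versus the leftward reflected factor. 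Because $k^{(n)}$ and $\omega+n\Omega$ are real, $\sum_n(\overline{z_n}-z_n)$ is purely imaginary, so $P^{\mathrm{cross}}=-\Re\big(\sum_n(\overline{z_n}-z_n)\big)=0$ pointwise in $x$, consistent with the flux being $x$-independent in the static region; this is the computation deferred to Appendix \ref{app:calc}. Hence the left-hand flux equals $P^{\mathrm{in}}-P^{\mathrm{ref}}$, and setting $P^{\mathrm{tot}}:=P^{\mathrm{tr}}+P^{\mathrm{ref}}$ and comparing with $P^{\mathrm{in}}=\|\mathcal{O}\boldsymbol{\alpha}\|^2$ yields the three energy states (gain, conservation, loss) directly. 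Throughout, the mode series are understood in the $\ell^2$ sense, so that all sums and the operator $\mathcal{O}$ act on square-summable sequences.
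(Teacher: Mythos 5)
Your proposal is correct and follows essentially the same route as the paper: direct evaluation of the time-averaged flux \eqref{def:Fstatic} in the static exterior regions, using the orthogonality $\frac{1}{T}\int_0^T\mathrm{e}^{\mathrm{i}(n-m)\Omega t}\,\mathrm{d}t=\delta_{nm}$ to collapse the double sums into the expressions \eqref{def:ptr}--\eqref{def:pref}, together with the cancellation of the cross term. Your $z_n$ argument for $P^{\mathrm{cross}}=0$ (the two diagonal contributions are conjugates with opposite sign, hence their difference is purely imaginary) is the same mechanism the paper records in Appendix \ref{app:calc} via the antisymmetry $C_{mn}=-\overline{C_{nm}}$, so $\Re(C_{nn})=0$.
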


    \subsection{Point approximation}
    We now consider the case of small resonators, \textit{i.e.} we let $\ell\to0$, which sets the ground for more explicit results. First, we prove an asymptotic expression of the total scattering matrix \eqref{def:totalS}, similarly as in \cite[Lemma 4.2]{ammari2025effectivemediumtheorytimemodulated}, for which we make the following assumption on the scaling of the material parameters:
    \begin{ass}
        Assume that each resonator is of equal length $\ell_i\equiv\ell$. We assume that there exists a set of parameters $\gamma,\,\mu,\,\xi>0$ independent of $\ell$ such that $\delta=\gamma\ell^2$ and since we assume the operating frequency to be at the subwavelength resonant frequency we set $\omega=\mu\ell$ and $\Omega=\xi\ell$ \cite{jinghao_liora}. As a consequence, the values $\lambda_j^i$ are of the same order, \textit{i.e.} there exists $c_j^i>0$ such that $\lambda_j^i=c_j^i\ell$, for all $i=1,\dots,N,\,j\in\mathbb{Z}$. In order to carry out analytical calculations, we shall assume these scalings for the remainder of this paper.
    \end{ass}
    \begin{thm}\label{thm:asymptoticS}
        As $\ell\to0$, the total scattering matrix is of the following form:
        \begin{align}\label{def:asymptotic_Si_tdep}
            S=\begin{bmatrix}
                I_{\infty} & 0 \\ 0 & I_{\infty}
            \end{bmatrix}+\begin{bmatrix}
                \mathcal{G} & 0 \\ 0 & \mathcal{G}
            \end{bmatrix}\begin{bmatrix}
                I_{\infty} & I_{\infty} \\ I_{\infty} & I_{\infty}
            \end{bmatrix} + O(\ell^2),
        \end{align}
        where $I_{\infty}$ denotes the infinite identity matrix, and $\mathcal{G}:=\left(g_{nm}\right)_{n,m=-\infty}^\infty$ is given by
        \begin{align}\label{def:g_Acal_tdep}
            \mathcal{G}:=\left(I_{\infty}-\mathcal{A}\right)^{-1}\mathcal{A},\quad \mathcal{A}:=\sum\limits_{i=1}^N\frac{\mathrm{i}\ell}{2\gamma}\mathrm{diag}\left(\frac{v_0}{\mu+n\xi}\right)_{n=-\infty}^{\infty}F^i\mathrm{diag}\left((c_j^i)^2\right)_{j=-\infty}^{\infty}G^i,
        \end{align}
        where $F^i:=\left(\boldsymbol{f}^{j,i}\right)_{j=-\infty}^{\infty}$, namely, the columns of $F^i$ are the eigenvectors $\boldsymbol{f}^{j,i}$, and we write $G^i:=\left(F^i\right)^{-1}$.
    \end{thm}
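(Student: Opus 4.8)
The plan is to follow the strategy of \cite[Lemma 4.2]{ammari2025effectivemediumtheorytimemodulated}: expand each single--resonator transfer matrix $\Tilde{S}^i$ asymptotically in $\ell$ from the explicit block expressions recorded in the proof of the preceding proposition (see also \cite[(4.6)]{ammari2025effectivemediumtheorytimemodulated}), then exploit the fact that transfer matrices compose \emph{multiplicatively} (cf.\ \eqref{eq:totalS}), so that the leading corrections add over the $N$ resonators, and finally convert the resulting total transfer matrix into the total scattering matrix via the block relations \eqref{def:totalS}. The key qualitative point, which I would establish first, is that each resonator behaves in the limit $\ell\to0$ as a matrix-valued delta scatterer, whose transfer matrix is a rank-structured perturbation of the free-space identity.

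First I would insert the scalings of the Assumption into the building-block matrices. Since $k^{(n)}=(\mu+n\xi)\ell/v_0$ and $\lambda_j^i=c_j^i\ell$ are all $O(\ell)$, every exponential matrix satisfies $E_{\pm,\mathrm{l}},E_{\pm,\mathrm{r}},T^{\pm}=I_\infty+O(\ell)$, while $K=O(\ell)$ (hence $K^{-1}=O(\ell^{-1})$) and $T^{\pm}_\lambda=O(\ell)$; the high-contrast scaling contributes $1/\delta=1/(\gamma\ell^2)=O(\ell^{-2})$. Substituting these expansions into the four blocks of $\Tilde{S}^i$ and using $G^i=(F^i)^{-1}$, the leading term is the free-space transfer matrix $I_{2\infty}$ (a vanishing resonator is transparent), and I expect the first surviving correction to be $O(\ell)$ with the nilpotent, delta-scatterer block structure
\begin{equation*}
\Tilde{S}^i=\begin{bmatrix} I_\infty & 0\\ 0 & I_\infty\end{bmatrix}+\mathcal{A}^i\begin{bmatrix} I_\infty & I_\infty\\ -I_\infty & -I_\infty\end{bmatrix}+O(\ell^2),\qquad \mathcal{A}^i:=\frac{\mathrm{i}\ell}{2\gamma}\,\mathrm{diag}\!\left(\tfrac{v_0}{\mu+n\xi}\right)_{n}F^i\,\mathrm{diag}\!\left((c_j^i)^2\right)_{j}G^i,
\end{equation*}
so that the single-resonator strengths add to give $\mathcal{A}=\sum_{i=1}^N\mathcal{A}^i$ as in \eqref{def:g_Acal_tdep}.

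Next, because $\bigl[\begin{smallmatrix} I_\infty & I_\infty\\ -I_\infty & -I_\infty\end{smallmatrix}\bigr]^2=0$, forming the product $\Tilde{S}=\Tilde{S}^N\cdots\Tilde{S}^1$ annihilates all cross terms to the order considered, yielding $\Tilde{S}=I_{2\infty}+\mathcal{A}\,\bigl[\begin{smallmatrix} I_\infty & I_\infty\\ -I_\infty & -I_\infty\end{smallmatrix}\bigr]+O(\ell^2)$; this is precisely where the additivity $\mathcal{A}=\sum_i\mathcal{A}^i$ is used. I then convert to $S$ through \eqref{def:totalS}. Reading off the blocks $\Tilde{S}_{21}=-\mathcal{A}+O(\ell^2)$ and $\Tilde{S}_{22}=I_\infty-\mathcal{A}+O(\ell^2)$, the block inversion $(\Tilde{S}_{22})^{-1}$ is exactly where the resolvent enters: $S_{21}=-(\Tilde{S}_{22})^{-1}\Tilde{S}_{21}=(I_\infty-\mathcal{A})^{-1}\mathcal{A}+O(\ell^2)=\mathcal{G}+O(\ell^2)$. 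A short computation with $S_{11}=\Tilde{S}_{11}-\Tilde{S}_{12}(\Tilde{S}_{22})^{-1}\Tilde{S}_{21}$, $S_{12}=\Tilde{S}_{12}(\Tilde{S}_{22})^{-1}$ and $S_{22}=(\Tilde{S}_{22})^{-1}$ then gives $S_{11}=S_{22}=I_\infty+\mathcal{G}+O(\ell^2)$ and $S_{12}=S_{21}=\mathcal{G}+O(\ell^2)$, which is exactly \eqref{def:asymptotic_Si_tdep}. Thus the resolvent form $\mathcal{G}=(I_\infty-\mathcal{A})^{-1}\mathcal{A}$ is not mere decoration but the natural multiple-scattering resummation produced by the inversion, and since $\mathcal{A}=O(\ell)$ it is in any case consistent with $\mathcal{A}$ to the stated order.

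The main obstacle is the very first step. The block formulas contain terms carrying the prefactors $1/\delta=O(\ell^{-2})$ and $K^{-1}=O(\ell^{-1})$, which are individually far larger than the claimed $O(\ell)$ correction, so one must expand the interior matrices $T^{\pm},T^{\pm}_\lambda$ (and the products $T^+\hat T$, $T^+_\lambda\hat T$, etc.) to high enough order that all the apparently divergent $O(\ell^{-2})$, $O(\ell^{-1})$ and $O(1)$-beyond-identity contributions cancel, leaving only the capacitive $O(\ell)$ survivor. Verifying these cancellations — and checking that the factor $1/\delta$, the frequency factor $v_0/(\mu+n\xi)$ coming from $K^{-1}$, and the two powers of $\lambda_j^i$ (i.e.\ $(c_j^i)^2\ell^2$) assemble, with the correct numerical constant, into $\tfrac{\mathrm{i}\ell}{2\gamma}\mathrm{diag}(v_0/(\mu+n\xi))F^i\mathrm{diag}((c_j^i)^2)G^i$, while using $F^iG^i=I_\infty$ to collapse the interior factors — is the delicate part of the argument, for which I would lean on the analogous computation in \cite[Lemma 4.2]{ammari2025effectivemediumtheorytimemodulated}.
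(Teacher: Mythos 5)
Your proposal is correct and follows essentially the same route as the paper: the paper's proof is a single sentence invoking induction over $N$ together with the methods of \cite[Lemma 4.2]{ammari2025effectivemediumtheorytimemodulated}, which is precisely the computation you spell out (single-resonator transfer matrix as an $O(\ell)$ delta-scatterer perturbation of the identity, additivity of the corrections under the product \eqref{eq:totalS}, then block conversion via \eqref{def:totalS} producing the resolvent $(I_\infty-\mathcal{A})^{-1}$). Your identification of the divergent $1/\delta$ and $K^{-1}$ cancellations as the delicate step deferred to the cited lemma matches exactly where the paper also defers.
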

    \begin{proof}
        The formula follows directly by induction over $N$ using the methods from the proof of \cite[Lemma 4.2]{ammari2025effectivemediumtheorytimemodulated}.
    \end{proof}
    Using the notion of the asymptotic expression of the total scattering matrix and the asymptotic result proven in Theorem \ref{thm:asymptoticS}, we obtain
    \begin{align}
        \begin{cases}
            \boldsymbol{\alpha}^{N+1}=\left(I_{\infty}+\mathcal{G}\right)\boldsymbol{\alpha}+O(\ell^2),\\
            \boldsymbol{\beta}^1=\mathcal{G}\boldsymbol{\alpha}+O(\ell^2).
        \end{cases}
    \end{align}
    Therefore, we may define the reflected and scattered wave field, for $\ell$ sufficiently small, by
    \begin{align}
        u^{\mathrm{ref}}(x,t)=\sum\limits_{n=-\infty}^\infty\left(\sum\limits_{m=-\infty}^\infty g_{nm}a_m\right)\mathrm{e}^{\mathrm{i}\left(-k^{(n)}x+\left(\omega+n\Omega\right)t\right)},\label{def:ref_wave}\\ u^{\mathrm{tr}}(x,t)=\sum\limits_{n=-\infty}^\infty\left(a_n+\sum\limits_{m=-\infty}^\infty g_{nm}a_m\right)\mathrm{e}^{\mathrm{i}\left(k^{(n)}x+\left(\omega+n\Omega\right)t\right)},\label{def:tran_wave}
    \end{align}
    respectively. This allows us to formulate the Optical Theorem for small resonators. 
    \begin{thm}[Optical Theorem for Small Resonators]\label{thm:Optical_asympt}
    For small $\ell$, the transmitted and reflected energy fluxes are given by
        \begin{align}
            P^{\mathrm{tr}}=||\mathcal{O}(I_{\infty}+\mathcal{G})\boldsymbol{\alpha}||^2,\quad P^{\mathrm{ref}}=||\mathcal{O}\mathcal{G}\boldsymbol{\alpha}||^2,
        \end{align}
        up to errors of order $O(\ell^2)$, where $\mathcal{O}:=\mathrm{diag}\left(\omega+n\Omega\right)_{n=-\infty}^\infty$. Hence, the total energy flux is given by
        \begin{align}
            P^\mathrm{tot}=||\mathcal{O}(I_{\infty}+\mathcal{G})\boldsymbol{\alpha}||^2+||\mathcal{O}\mathcal{G}\boldsymbol{\alpha}||^2.
        \end{align}
    \end{thm}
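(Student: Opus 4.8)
The plan is to obtain the small-$\ell$ fluxes by feeding the asymptotic scattering matrix of Theorem~\ref{thm:asymptoticS} into the energy-flux computation already carried out for the exact system. Performing the block product in \eqref{def:asymptotic_Si_tdep}, namely $\begin{bmatrix}\mathcal{G}&0\\0&\mathcal{G}\end{bmatrix}\begin{bmatrix}I_{\infty}&I_{\infty}\\I_{\infty}&I_{\infty}\end{bmatrix}=\begin{bmatrix}\mathcal{G}&\mathcal{G}\\\mathcal{G}&\mathcal{G}\end{bmatrix}$, gives $S=\begin{bmatrix}I_{\infty}+\mathcal{G}&\mathcal{G}\\\mathcal{G}&I_{\infty}+\mathcal{G}\end{bmatrix}+O(\ell^2)$, so that the scattered coefficients satisfy $\boldsymbol{\alpha}^{N+1}=(I_{\infty}+\mathcal{G})\boldsymbol{\alpha}+O(\ell^2)$ and $\boldsymbol{\beta}^1=\mathcal{G}\boldsymbol{\alpha}+O(\ell^2)$, as recorded following Theorem~\ref{thm:asymptoticS}. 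These are exactly the coefficients appearing in the asymptotic transmitted and reflected fields \eqref{def:tran_wave}--\eqref{def:ref_wave}, which I would take as the starting point so as to sidestep carrying the full matrix $S$ through the calculation.

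First I would evaluate the time-averaged flux \eqref{def:Fstatic} directly for these two fields. For a single outgoing field $u=\sum_n c_n\,\mathrm{e}^{\mathrm{i}(\pm k^{(n)}x+(\omega+n\Omega)t)}$ one expands $\partial_t\overline{u}\,\partial_x u$ into a double sum over mode indices $(n,m)$; integrating over one period $T=2\pi/\Omega$ and using the orthogonality of $\{\mathrm{e}^{\mathrm{i}(n-m)\Omega t}\}$ annihilates every term with $n\neq m$, leaving only the diagonal contributions, which collapse to $\sum_n|\omega+n\Omega|^2|c_n|^2=\|\mathcal{O}\mathbf{c}\|^2$. This is precisely the mechanism behind \eqref{def:ptr}--\eqref{def:pref}. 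Choosing $\mathbf{c}=(I_{\infty}+\mathcal{G})\boldsymbol{\alpha}$ for $u^{\mathrm{tr}}$ and $\mathbf{c}=\mathcal{G}\boldsymbol{\alpha}$ for $u^{\mathrm{ref}}$ then yields $P^{\mathrm{tr}}=\|\mathcal{O}(I_{\infty}+\mathcal{G})\boldsymbol{\alpha}\|^2$ and $P^{\mathrm{ref}}=\|\mathcal{O}\mathcal{G}\boldsymbol{\alpha}\|^2$, and summing gives $P^{\mathrm{tot}}$.

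The only genuinely delicate step is to confirm that the $O(\ell^2)$ error in the scattering blocks propagates to an $O(\ell^2)$ error in the fluxes. Since $\mathbf{c}\mapsto\|\mathcal{O}\mathbf{c}\|^2$ is a continuous quadratic functional, it suffices to bound $\|\mathcal{O}\big(S_{11}-(I_{\infty}+\mathcal{G})\big)\boldsymbol{\alpha}\|$ and its reflected counterpart. The subtlety, and the main obstacle, is that $\mathcal{O}=\mathrm{diag}(\omega+n\Omega)$ is \emph{unbounded}, so it could in principle amplify a perturbation that is small only in operator norm. I would resolve this using the structure of \eqref{def:g_Acal_tdep}: under the Assumption one has $\mathcal{O}=\ell\,\mathrm{diag}(\mu+n\xi)$, whereas $\mathcal{A}$---and hence $\mathcal{G}$ and the subleading correction, which share its form---carries the decaying diagonal factor $\mathrm{diag}\big(v_0/(\mu+n\xi)\big)$; the two factors cancel, so that $\mathcal{O}$ applied to the correction stays bounded on $\ell^2$ and the error remains $O(\ell^2)$. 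Equivalently, for the band-limited incident fields used in the numerics (only finitely many nonzero $a_m$) only finitely many modes are active and the estimate is immediate. In short, the algebra mirrors the exact Optical Theorem (Theorem~\ref{thm:Optical}), and the real content is securing the summability/decay needed to make the weighted $\ell^2$ functional stable under the $O(\ell^2)$ perturbation despite the unbounded weight $\mathcal{O}$.
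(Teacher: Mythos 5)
Your proposal is correct and follows essentially the same route as the paper, which obtains Theorem~\ref{thm:Optical_asympt} precisely by inserting the asymptotic coefficients $\boldsymbol{\alpha}^{N+1}=(I_{\infty}+\mathcal{G})\boldsymbol{\alpha}+O(\ell^2)$ and $\boldsymbol{\beta}^1=\mathcal{G}\boldsymbol{\alpha}+O(\ell^2)$ from Theorem~\ref{thm:asymptoticS} into the fields \eqref{def:ref_wave}--\eqref{def:tran_wave} and repeating the time-averaging/mode-orthogonality computation behind \eqref{def:ptr}--\eqref{def:pref}. Your third paragraph on propagating the $O(\ell^2)$ error through the unbounded weight $\mathcal{O}$ is extra care the paper omits, and the diagonal-cancellation argument there is the right one; only the side remark that a band-limited incident field leaves ``only finitely many modes active'' is inaccurate, since $\mathcal{G}$ is not banded and the scattered field generically excites all harmonics.
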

    \begin{rem}
        For the remainder of this paper we shall assume $||\mathcal{O}\boldsymbol{\alpha}||=1$.
    \end{rem}
    In Figure \ref{fig:E_overview} we plot the total energy flux as a function of the incident wave field's coefficients $a_0$ and $a_{\pm1}$. These numerical results allow us to clearly identify the three areas as formulated in the Optical Theorem. Additionally, we mark the maxima and minima of the surface in Figure \ref{fig:E_overview}, which correspond to the highest attainable energy gain and loss, respectively. We shall further investigate these two values numerically in Section \ref{sec:NumExp}.\par 
    \begin{figure}[H]
        \centering
        \includegraphics[width=0.85\textwidth]{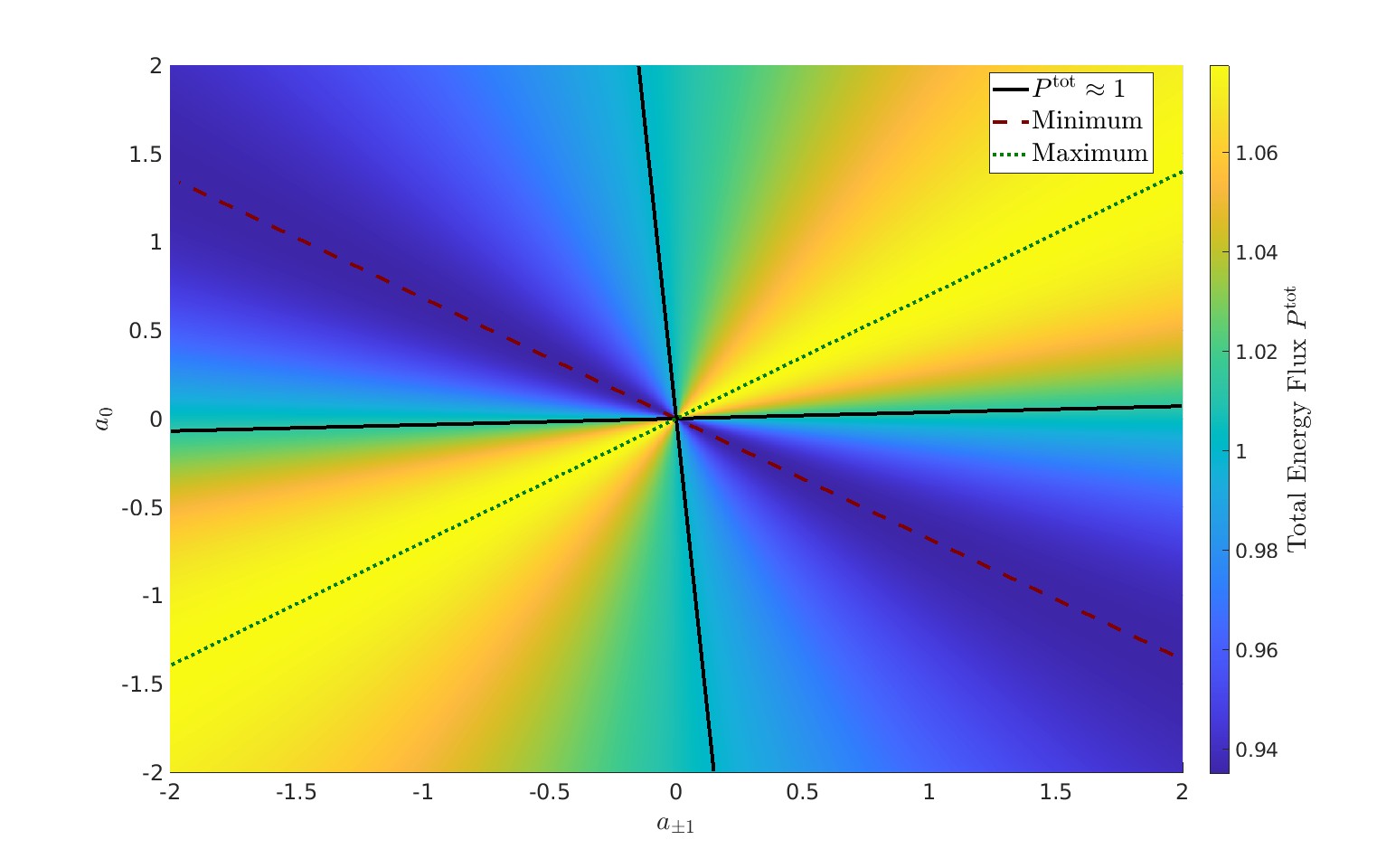}
        \caption{The total energy flux $P^\mathrm{tot}$ for a single resonator, where the incident wave has three modes given by $a_{-1}\mathrm{e}^{\mathrm{i}k^{(-1)}x}$, $a_{0}\mathrm{e}^{\mathrm{i}k^{(0)}x}$ and $a_{1}\mathrm{e}^{\mathrm{i}k^{(1)}x}$ with $a_{-1}=a_1$ and normalised such that $||\mathcal{O}\boldsymbol{\alpha}||=1$.  Here, we compute the total energy flux $P^\mathrm{tot}$ as a function of $a_0$ and $a_{\pm1}$, and observe regions of energy conservation, energy gain and energy loss. We have used  the following material parameters: $\ell=0.1$, $\varepsilon_{\rho}=0,\,\varepsilon_{\kappa}=0.9,\,\mu=0.9,v_{\mathrm{r}}=v_0=1,\,\gamma=0.05,\,\xi=0.2,\,\phi_{\kappa}=\pi/2$.}\label{fig:E_overview}
    \end{figure}
    The following result underlines the relevance of the above proven asymptotic result, since it provides an explicit condition to guarantee energy conservation, regardless of the material parameters.
    \begin{thm}\label{thm:Econs_Geigen}
    Consider a scattering system with a single resonator, \textit{i.e.} $N=1$.
        Let $\left(\varphi,\boldsymbol{v}\right)$ be an eigenpair of the matrix $\mathcal{G}$ defined in \eqref{def:g_Acal_tdep} such that $||\mathcal{O}\boldsymbol{v}||=1$. If the incident field is given by $\boldsymbol{\alpha}=\boldsymbol{v}$, then the energy is conserved, \textit{i.e.} $P^\mathrm{tot}=1$.
    \end{thm}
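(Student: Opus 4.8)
The plan is to evaluate the energy-flux formula of Theorem~\ref{thm:Optical_asympt} on the eigenvector $\boldsymbol{v}$ and reduce the conservation identity to a spectral statement about $\mathcal{A}$. Setting $\boldsymbol{\alpha}=\boldsymbol{v}$ and using $\mathcal{G}\boldsymbol{v}=\varphi\boldsymbol{v}$ gives $(I_\infty+\mathcal{G})\boldsymbol{v}=(1+\varphi)\boldsymbol{v}$, so that, since $\|\mathcal{O}\boldsymbol{v}\|=1$,
\begin{align*}
P^{\mathrm{tot}}=\|\mathcal{O}(I_\infty+\mathcal{G})\boldsymbol{v}\|^2+\|\mathcal{O}\mathcal{G}\boldsymbol{v}\|^2=|1+\varphi|^2\,\|\mathcal{O}\boldsymbol{v}\|^2+|\varphi|^2\,\|\mathcal{O}\boldsymbol{v}\|^2=|1+\varphi|^2+|\varphi|^2.
\end{align*}
Expanding $|1+\varphi|^2=1+2\Re(\varphi)+|\varphi|^2$, the target identity $P^{\mathrm{tot}}=1$ is equivalent to $\Re(\varphi)+|\varphi|^2=0$. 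The whole statement thus reduces to showing that every eigenvalue $\varphi$ of $\mathcal{G}$ lies on the circle $\{\Re(\varphi)+|\varphi|^2=0\}$.

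Next I would pass from $\mathcal{G}$ to $\mathcal{A}$, which carries the useful algebraic structure. Since $\mathcal{G}=(I_\infty-\mathcal{A})^{-1}\mathcal{A}$, the relation $\mathcal{G}\boldsymbol{v}=\varphi\boldsymbol{v}$ rearranges to $(1+\varphi)\mathcal{A}\boldsymbol{v}=\varphi\boldsymbol{v}$, so $\boldsymbol{v}$ is also an eigenvector of $\mathcal{A}$ with eigenvalue $\psi:=\varphi/(1+\varphi)$ (note $\varphi=-1$ is impossible, as it would force $\boldsymbol{v}=\boldsymbol{0}$). A short computation with this M\"obius map gives $\Re(\psi)=\bigl(\Re(\varphi)+|\varphi|^2\bigr)/|1+\varphi|^2$, so $\Re(\varphi)+|\varphi|^2=0$ is exactly $\Re(\psi)=0$. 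Hence it suffices to prove that the spectrum of $\mathcal{A}$ is purely imaginary.

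The heart of the argument is to reduce $\mathcal{A}$, for $N=1$, to $\mathrm{i}$ times a Hermitian object. Using $\lambda_j^1=c_j^1\ell$ and the interior eigenvalue problem of \cite[Lemma 2.1]{ammari2024scattering}, the conjugation $F^1\mathrm{diag}((c_j^1)^2)G^1$ equals $\ell^{-2}\mathcal{O}\mathcal{K}\mathcal{O}$, where $\mathcal{K}:=(k_{1,\,n-m})_{n,m}$ is the convolution (Toeplitz) matrix built from the Fourier coefficients of $1/\kappa_1(t)$. Together with $\mathrm{diag}(v_0/(\mu+n\xi))_n=v_0\ell\,\mathcal{O}^{-1}$, all powers of $\ell$ cancel and the prefactors collapse to $\mathcal{A}=\tfrac{\mathrm{i}v_0}{2\gamma}\mathcal{K}\mathcal{O}$. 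Since $1/\kappa_1(t)$ is real-valued, $k_{1,-n}=\overline{k_{1,n}}$, so $\mathcal{K}$ is Hermitian, and as $\mathcal{O}$ is real and diagonal, $\mathcal{O}\mathcal{K}\mathcal{O}$ is Hermitian for the standard inner product. Multiplying the eigenrelation $\tfrac{\mathrm{i}v_0}{2\gamma}\mathcal{K}\mathcal{O}\boldsymbol{v}=\psi\boldsymbol{v}$ on the left by $\boldsymbol{v}^{\ast}\mathcal{O}$ then yields
\begin{align*}
\psi\,\boldsymbol{v}^{\ast}\mathcal{O}\boldsymbol{v}=\frac{\mathrm{i}v_0}{2\gamma}\,\boldsymbol{v}^{\ast}\mathcal{O}\mathcal{K}\mathcal{O}\boldsymbol{v},
\end{align*}
where both $\boldsymbol{v}^{\ast}\mathcal{O}\boldsymbol{v}$ and $\boldsymbol{v}^{\ast}\mathcal{O}\mathcal{K}\mathcal{O}\boldsymbol{v}$ are real; hence $\psi$ equals $\mathrm{i}$ times a real number and is purely imaginary, which closes the chain back to $P^{\mathrm{tot}}=1$.

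The main obstacle is the structural identification $F^1\mathrm{diag}((c_j^1)^2)G^1=\ell^{-2}\mathcal{O}\mathcal{K}\mathcal{O}$: verifying it requires the precise form of the interior modes $\boldsymbol{f}^{j,1}$ together with the reality of the modulation, and it is exactly this combination that forces the spectrum onto the imaginary axis. Two points then need care. First, the quadratic form $\boldsymbol{v}^{\ast}\mathcal{O}\boldsymbol{v}$ is indefinite, since $\omega+n\Omega$ changes sign, so one must rule out $\boldsymbol{v}^{\ast}\mathcal{O}\boldsymbol{v}=0$ for the relevant eigenvectors before dividing in the display above. Second, if the density $\rho_1(t)$ is also modulated then $\mathcal{A}$ acquires an extra factor $\mathcal{R}^{-1}$ (with $\mathcal{R}$ the Toeplitz matrix of $1/\rho_1(t)$), and the correct formulation is that $\mathcal{A}$ is self-adjoint with respect to the $\mathcal{R}$-weighted form; this refinement should be incorporated if the statement is claimed beyond the purely $\kappa$-modulated case.
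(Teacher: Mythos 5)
Your overall strategy is the same as the paper's, and your first two steps are correct (indeed slightly more careful than the paper's: you verify that $\boldsymbol{v}$ is itself an eigenvector of $\mathcal{A}$ with eigenvalue $\psi=\varphi/(1+\varphi)$ and that $\varphi\neq-1$). The gaps are in the final, structural step. First, your identity $F^1\mathrm{diag}\left((c_j^1)^2\right)G^1=\ell^{-2}\mathcal{O}\mathcal{K}\mathcal{O}$ holds only when the density is unmodulated: in general one has $F^1\mathrm{diag}\left((c_j^1)^2\right)G^1=\frac{1}{\ell^2 v_{\mathrm{r}}^2}\mathcal{R}^{-1}\mathcal{O}\mathcal{K}\mathcal{O}$, where $\mathcal{R}$ is the Toeplitz matrix of the coefficients $r_{1,n}$ in \eqref{def:inv_rhokappa} (written $R_1$, $K_1$ in the paper), hence $\mathcal{A}=\frac{\mathrm{i}v_0}{2\gamma v_{\mathrm{r}}^2}\mathcal{O}^{-1}\mathcal{R}^{-1}\mathcal{O}\mathcal{K}\mathcal{O}$. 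Theorem \ref{thm:Econs_Geigen} is stated, and numerically tested in Figure \ref{fig:Geigenvals} with $\varepsilon_\rho=0.5$, for modulations of both $\kappa_1$ and $\rho_1$, so treating only $\mathcal{R}=I_\infty$ proves a strictly weaker statement; relegating the general case to a closing caveat does not cover the theorem. Moreover that caveat is itself incorrect: $-\mathrm{i}\mathcal{A}$ is not self-adjoint with respect to the $\mathcal{R}$-weighted form, since $\mathcal{O}^{-1}$ and $\mathcal{R}^{-1}$ do not commute. The correct positive definite weight is $W=\mathcal{O}\mathcal{R}\mathcal{K}\mathcal{O}$: using that the Laurent operators $\mathcal{R}$ and $\mathcal{K}$ commute, one checks $W(-\mathrm{i}\mathcal{A})=\frac{v_0}{2\gamma v_{\mathrm{r}}^2}\mathcal{O}\mathcal{K}\mathcal{O}\mathcal{K}\mathcal{O}$, which is Hermitian. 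Equivalently, the paper conjugates by $Q=\mathcal{K}^{1/2}\mathcal{R}^{1/2}\mathcal{O}$, so that $Q\mathcal{A}Q^{-1}$ is $\mathrm{i}$ times a Hermitian matrix.

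Second, even in the purely $\kappa$-modulated case your Rayleigh-quotient step is incomplete, as you yourself note: the form $\boldsymbol{v}^{\ast}\mathcal{O}\boldsymbol{v}$ is indefinite because $\omega+n\Omega$ changes sign, and if $\boldsymbol{v}^{\ast}\mathcal{O}\boldsymbol{v}=0$ your display reads $0=0$ and gives no information about $\psi$. Flagging this degeneracy is not the same as excluding it, and nothing in your proposal excludes it. Both gaps close simultaneously once the indefinite weight $\mathcal{O}$ is replaced by the positive definite $W$ above: multiplying the eigenrelation by $\boldsymbol{v}^{\ast}W$ gives $\psi\,\boldsymbol{v}^{\ast}W\boldsymbol{v}=\frac{\mathrm{i}v_0}{2\gamma v_{\mathrm{r}}^2}\,\boldsymbol{v}^{\ast}\mathcal{O}\mathcal{K}\mathcal{O}\mathcal{K}\mathcal{O}\boldsymbol{v}$ with $\boldsymbol{v}^{\ast}W\boldsymbol{v}>0$ and a real right-hand form, so $\psi$ is purely imaginary; alternatively, the paper's similarity transformation shows at once that the entire spectrum of $\mathcal{A}$ is purely imaginary, with no division at all. (The dropped factor $1/v_{\mathrm{r}}^2$ in your prefactor is harmless, being real and positive.) With these two repairs your argument becomes essentially the paper's proof.
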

    \begin{proof}
        If $\boldsymbol{\alpha}=\boldsymbol{v}$, then 
        \begin{align*}
            P^\mathrm{tot}=||\mathcal{O}(I_{2K+1}+\mathcal{G})\boldsymbol{v}||^2+||\mathcal{O}\mathcal{G}\boldsymbol{v}||^2=2|\varphi|^2+2\mathrm{Re}(\varphi)+1.
        \end{align*}
        Therefore, we obtain energy conservation if and only if $2|\varphi|^2+2\mathrm{Re}(\varphi)+1=1$, which we shall prove to be the case here.\par 
        Recall that $\mathcal{G}=\left(I_{\infty}-\mathcal{A}\right)^{-1}\mathcal{A}$ and let $\lambda$ be the eigenvalue of $\mathcal{A}$ such that $\varphi=\lambda/(1-\lambda)$. The function $f(z):=z/(1-z)$ is the Möbius transform, mapping the imaginary axis onto the circle of radius $1/2$ and centred at $-1/2$ in the complex plane. Thus, if $z\in\mathbb{C}\setminus\mathbb{R}$, then $f(z)=-0.5+0.5\mathrm{e}^{\mathrm{i}\theta}$, for some $\theta\in[0,2\pi)$. If we assume that $\lambda$ is purely imaginary, it follows that $\varphi=-0.5+0.5\mathrm{e}^{\mathrm{i}\theta}$, which is equivalent to 
        \begin{align}\label{eq:cons}
            2|\varphi|^2+2\mathrm{Re}\left(\varphi\right)+1=1,
        \end{align}
        which furnishes energy conservation. Therefore, if we can prove that the eigenvalues of $\mathcal{A}$ are purely imaginary, the claim $P^\mathrm{tot}=1$ follows.\par
        Assume $N=1$, then the matrix $\mathcal{A}$ is given by
        \begin{align}
            \mathcal{A}:=\frac{\mathrm{i}\ell}{2\gamma}\mathrm{diag}\left(\frac{v}{\mu+n\xi}\right)_{n=-\infty}^{\infty}F\mathrm{diag}\left((c_j)^2\right)_{j=-\infty}^{\infty}G,
        \end{align}
        where $\gamma,\,v,\,\mu,\,\xi,\,c_j$ are non-zero and real. We now aim to prove that $\mathcal{A}$ is similar to a Hermitian matrix multiplied by the imaginary unit $\mathrm{i}$, which will imply that the eigenvalues of $\mathcal{A}$ are purely imaginary.  Firstly, we note that
        \begin{align}
            \mathrm{diag}\left(\frac{v}{\mu+n\xi}\right)_{n=-\infty}^{\infty}=\ell v\mathcal{O}^{-1}.
        \end{align}
        Next, we recall that $\lambda_j=c_j\ell$ are the square-roots of the eigenvalues of a matrix $C_1$ and the columns of $F$ are its corresponding eigenvectors \cite[Lemma III.3]{jinghao_liora}. Note that by definition of this matrix $C_1$, we may write $C_1=\frac{1}{\left(v_{\mathrm{r}}\right)^2}R_1^{-1}\mathcal{O}K_1\mathcal{O}$, where
        \begin{align}
            R_1:=\begin{bmatrix}
                \ddots & \ddots & \ddots & & \\
                & r_{1,-1} & r_{1,0} & r_{1,1} &\\
                & & \ddots & \ddots & \ddots
            \end{bmatrix},\quad K_1:=\begin{bmatrix}
                \ddots & \ddots & \ddots & & \\
                & k_{1,-1} & k_{1,0} & k_{1,1} &\\
                & & \ddots & \ddots & \ddots
            \end{bmatrix},
        \end{align}
        for the Fourier coefficients of the inverse of $\rho$ and $\kappa$ as given in \eqref{def:inv_rhokappa}, respectively. Thus, since $G=F^{-1}$, we arrive at
        \begin{align}
            F\mathrm{diag}\left((c_j)^2\right)_{j=-\infty}^{\infty}G=\frac{1}{\ell^2\left(v_{\mathrm{r}}\right)^2}R_1^{-1}\mathcal{O}K_1\mathcal{O},
        \end{align}
        and therefore,
        \begin{align}
            \mathcal{A}=\frac{\mathrm{i}v}{2\gamma\left(v_{\mathrm{r}}\right)^2}\mathcal{O}^{-1}R_1^{-1}\mathcal{O}K_1\mathcal{O}.
        \end{align}
        We note that the matrices $R_1$ and $K_1$ commute, and since they are Hermitian and positive definite, their square-roots exist. Let $Q:=K_1^{1/2}R_1^{1/2}\mathcal{O}$, then
        \begin{align}
            Q\mathcal{A}Q^{-1}=\frac{\mathrm{i}v}{2\gamma\left(v_{\mathrm{r}}\right)^2}K_1^{1/2}R_1^{-1/2}\mathcal{O}R_1^{-1/2}K_1^{1/2},
        \end{align}
        which (up to the factor of $\mathrm{i}$) is a symmetric matrix. Hence, the eigenvalues of $K_1^{1/2}R_1^{-1/2}\mathcal{O}R_1^{-1/2}K_1^{1/2}$ are real and, thus, the eigenvalues of $Q\mathcal{A}Q^{-1}$ are purely imaginary, which concludes the proof.
    \end{proof}
    \begin{rem}
        The proof of Theorem \ref{thm:Econs_Geigen} holds true for a single resonator, but can be trivially generalised to $N>1$, if each resonator's material parameters are defined such that the eigenvalues and eigenvectors of $C_i$ are independent of $i$. Specifically, this is the case if the material parameters are identical across all resonators. 
    \end{rem}
    The numerical results depicted in Figure \ref{fig:Geigenvals} support the proof of Theorem \ref{thm:Econs_Geigen}. Additionally, we show in Figure \ref{fig:Etot_Geigenvals} that the energy is indeed conserved along the eigenvectors. Namely, we plot the total energy flux resulting from an incident wave field with coefficients
    \begin{align}
        \boldsymbol{\alpha}=\frac{c_1\boldsymbol{v}_1+c_2\boldsymbol{v}_2}{||\mathcal{O}\left(c_1\boldsymbol{v}_1+c_2\boldsymbol{v}_2\right)||}
    \end{align}
    and see that we achieve energy conservation for $c_1=0$ or $c_2=0$.\par 
    To summarise, if one wants to guarantee energy conservation in a time-modulated system, the coefficients $\boldsymbol{\alpha}$ of the incident wave should be an eigenvector of the matrix $\mathcal{G}$. 
    \begin{figure}[H]
        \begin{subfigure}{0.48\textwidth}
            \centering
            \includegraphics[width=0.99\textwidth]{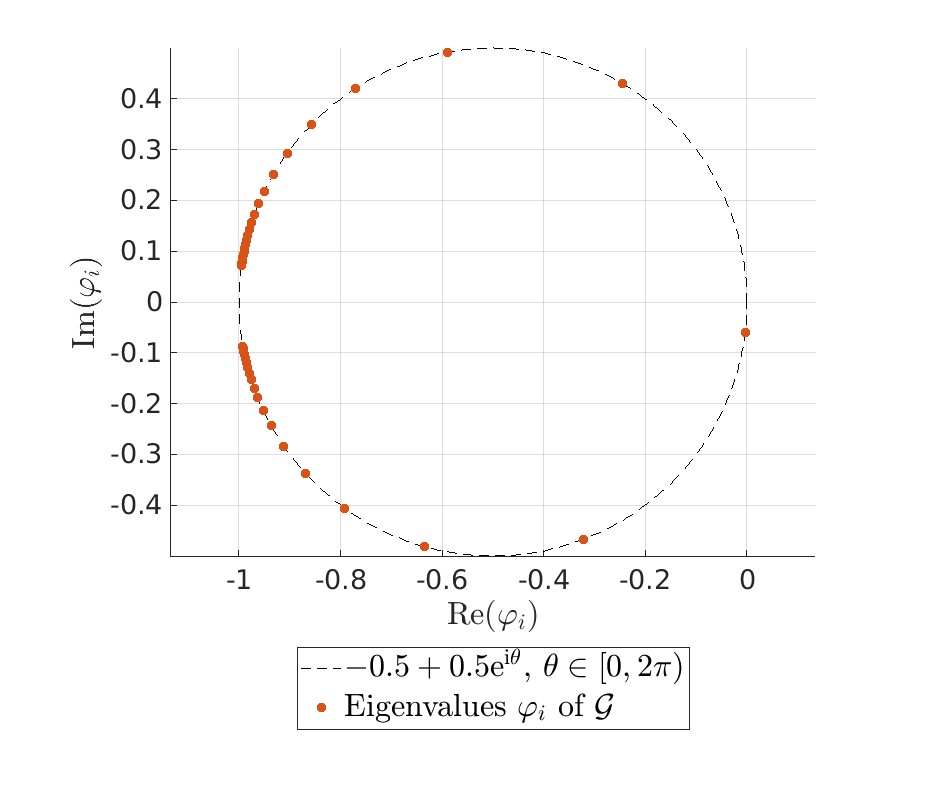}
            \caption{The eigenvalues $\left(\varphi_i\right)_{i=-K}^K$ of $\mathcal{G}$ for truncation number $K=20$.}\label{fig:Geigenvals}
        \end{subfigure}
        \hfill
        \begin{subfigure}{0.48\textwidth}
            \centering
            \includegraphics[width=0.99\textwidth]{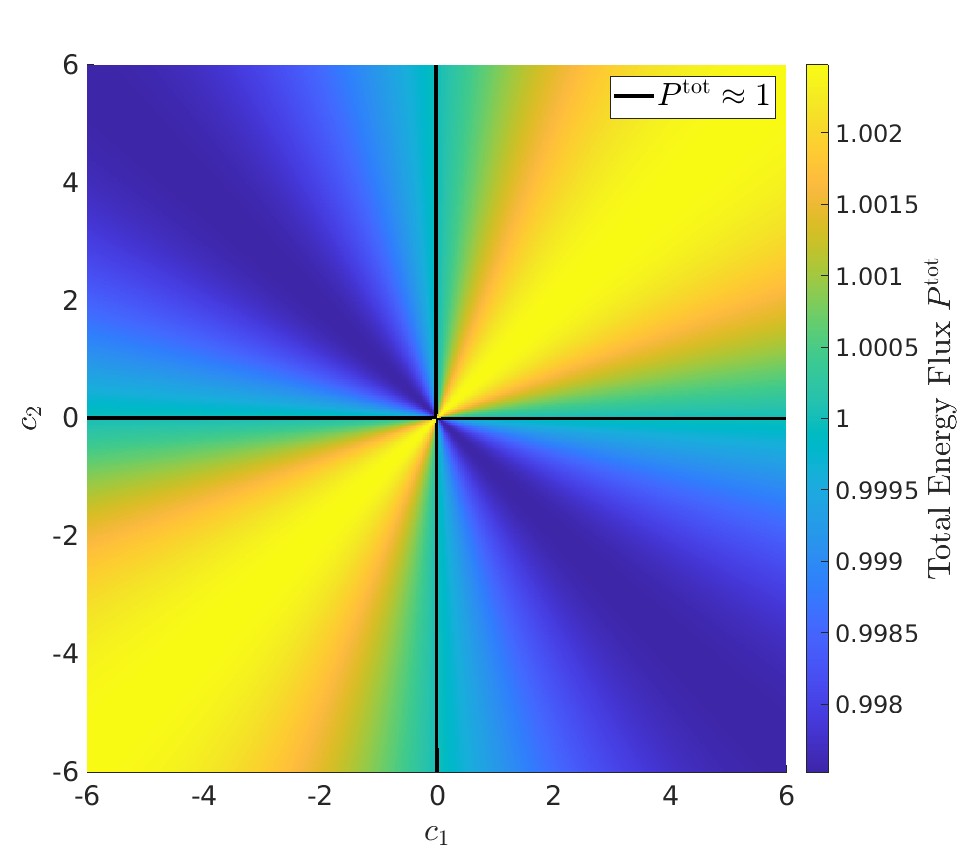}
            \caption{The total energy flux for incident wave field with coefficients $c_1\boldsymbol{v}_1+c_2\boldsymbol{v}_2$ upon normalisation, where $\boldsymbol{v}_1$ and $\boldsymbol{v}_2$ are eigenvectors of $\mathcal{G}$.}\label{fig:Etot_Geigenvals}
        \end{subfigure}
        \caption{The energy is conserved if the incident field is given by an eigenvector of $\mathcal{G}$. In (a) we see that all eigenvalues of $\mathcal{G}$ lie on the circle centred at $-\frac{1}{2}$ and with radius $\frac{1}{2}$, in agreement with \eqref{eq:cons}. In (b), we repeat the energy phase diagram of \Cref{fig:E_overview}, but where the coordinate axes are now given by the first two eigenvectors. As expected, we observe energy conservation along these axes. The material parameters are set to be $N=3,\,\ell=0.1,\,\delta=5\times10^{-4},\,v_{\mathrm{r}}=v_0=1,\,\omega=0.04,\,\Omega=0.1,\,\varepsilon_{\kappa}=\varepsilon_{\rho}=0.5,\,\phi_{\kappa}=\phi_{\rho}=\pi/2$.}
    \end{figure}

\section{Numerical experiments}\label{sec:NumExp}
Next, we provide a number of numerical illustrations to illustrate the energy balance of the system. In this section, we shall assume $\ell\ll1$ for the sake of convenience, but we note that our approach also works for a general $\ell>0$.

The wave field $u(x,t)$ has infinitely many modes $v_n(x)$, which we approximate by truncated Fourier series, with $K\in\mathbb{N}$ sufficiently large \cite{jinghao_liora}:
        \begin{align}\label{eq:utrunc}
            u^{\mathrm{sc}}(x,t)\approx\sum\limits_{n=-K}^{K}v_n^{\mathrm{sc}}(x)\mathrm{e}^{\mathrm{i}(\omega+n\Omega )t},\quad u^{\mathrm{in}}(x,t)\approx\sum\limits_{n=-K}^{K}a_n\mathrm{e}^{\mathrm{i}\left(k^{(n)}x+(\omega+n\Omega)t\right)}.
        \end{align}
        As a result we also need to truncate the scattering matrix to be of size $2(2K+1)\times2(2K+1)$ and $\mathcal{G}$ is of size $(2K+1)\times(2K+1)$\par

    \subsection{Energy dependency on material parameters}   
    For a given setting we shall denote the maximal achievable energy gain by $P_{\mathrm{gain}}$ and the maximal achievable energy loss by $P_{\mathrm{loss}}$. Note that $P_{\mathrm{gain}}$ and $P_{\mathrm{loss}}$ are the maxima and minima, respectively, of the surface shown in Figure \ref{fig:E_overview}. We now aim to better understand the impact of the various material parameters on $P_{\mathrm{gain}}$ and $P_{\mathrm{loss}}$.

    For the sake of simplicity we assume the incident wave field to be given by
    \begin{align}\label{def:incfield}
        u^{\mathrm{in}}(x,t)=a_{-1}\mathrm{e}^{\mathrm{i}\left(k^{(-1)}x+(\omega-\Omega)t\right)}+a_{0}\mathrm{e}^{\mathrm{i}\left(k^{(0)}x+\omega t\right)}+a_{1}\mathrm{e}^{\mathrm{i}\left(k^{(1)}x+(\omega+\Omega)t\right)},\quad a_{-1}=a_1.
    \end{align}
    We regard $P_{\mathrm{gain}}$ and $P_{\mathrm{loss}}$ as functions of $\left(a_0,a_{\pm1}\right)$, whereas, for a given set of material parameters we determine $\left(a_0,a_{\pm1}\right)$ for which $P^\mathrm{tot}$ attains its maximum and minimum value, respectively.

    Firstly, we see that for increasing $\varepsilon_{\kappa}$ we can attain higher energy gain and loss, as depicted in Figure \ref{fig:Elossgain_epsilonkappa}, and the same holds true for $\varepsilon_{\rho}$, which can be seen in Figure \ref{fig:Elossgain_epsilonrho}. As expected and proven in previous work \cite{feppon_1d}, for static material parameters, \textit{i.e.} $\varepsilon_{\kappa}=\varepsilon_{\rho}=0$, a system of $N$ resonators obeys energy conservation. The conclusions from the results shown in Figures \ref{fig:Elossgain_epsilonkappa} and \ref{fig:Elossgain_epsilonrho} can be shown numerically to also hold true for larger $N$.
    \begin{figure}[H]
        \centering
        \includegraphics[width=0.6\textwidth]{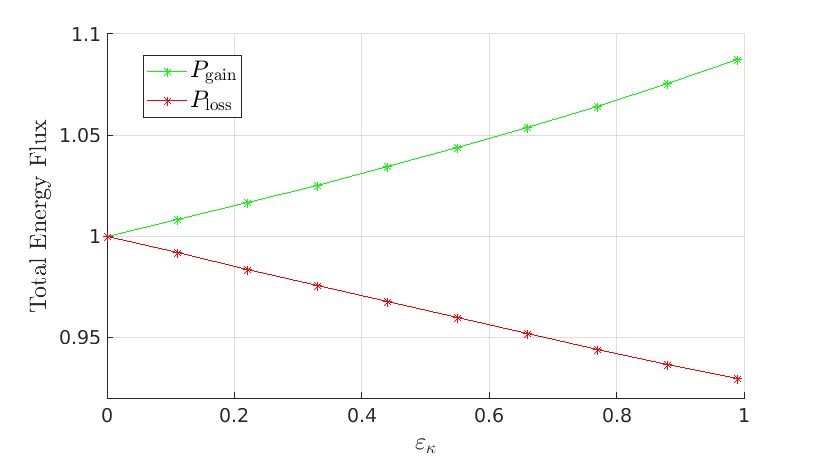}
        \caption{The maximal attainable energy gain and energy loss for a single resonator of length $\ell=0.1$ with the following material parameters: $\varepsilon_{\rho}=0,\,\mu=0.9,v_{\mathrm{r}}=v_0=1,\,\gamma=0.05,\,\xi=0.2,\,\phi_{\kappa}=\pi/2$.}\label{fig:Elossgain_epsilonkappa}
    \end{figure}
    \begin{figure}[H]
        \centering
        \includegraphics[width=0.6\textwidth]{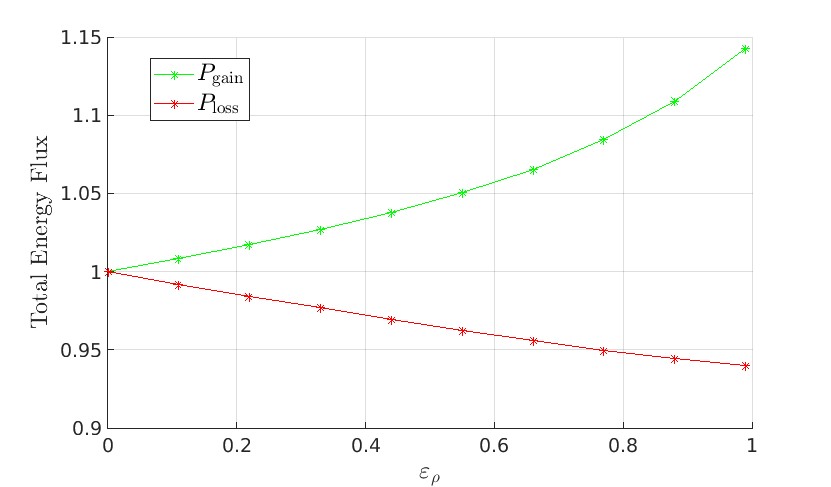}
        \caption{The maximal attainable energy gain and energy loss for a single resonator of length $\ell=0.1$ with the following material parameters: $\varepsilon_{\kappa}=0,\,\mu=0.9,v_{\mathrm{r}}=v_0=1,\,\gamma=0.05,\,\xi=0.2,\,\phi_{\rho}=\pi/2$.}\label{fig:Elossgain_epsilonrho}
    \end{figure}

    In Figure \ref{fig:Elossgain_omega} we investigate how the operating frequency $\omega$ can impact the maximum attainable energy gain and loss. Specifically, we see that for $\omega$ close to the resonant frequency $\omega_0$, which is defined through \eqref{eq:new_CapApprox}, the energy gain and loss can be maximally amplified with the suitable choice of the incident wave field's parameters $a_0$ and $a_{\pm1}$.\par 
    The results in Figure \ref{fig:Elossgain_omega} clearly show that for a wave field with operating frequency $\omega$ near the resonant frequency $\omega_0$, the energy gain or loss can be maximised. Combining the results from Figure \ref{fig:Elossgain_epsilonkappa} and Figure \ref{fig:Elossgain_omega} allows us to conclude that energy gain and loss can be amplified upon modulating the material parameters in time and choosing the operation frequency to be at a resonant frequency $\omega_0$, as shown in Figure \ref{fig:Elossgain_kappaomega0}. The same result holds true for $\varepsilon_{\rho}$.
    \begin{figure}[H]
        \centering
        \includegraphics[width=0.6\textwidth]{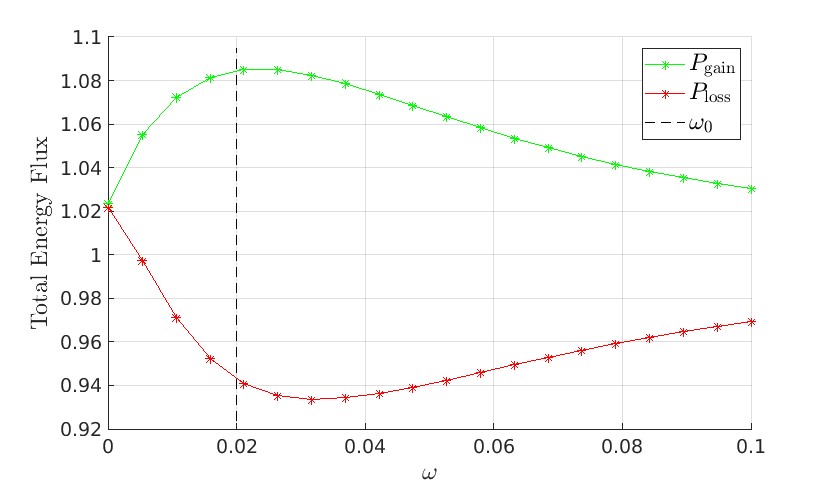}
        \caption{The maximal attainable energy gain and energy loss for a single resonator of length $\ell=0.1$ with the following material parameters: $\varepsilon_{\rho}=0,\,\varepsilon_{\kappa}=0.5,\,v_{\mathrm{r}}=v_0=1,\,\gamma=0.05,\,\xi=0.2,\,\phi_{\kappa}=\pi/2$.}\label{fig:Elossgain_omega}
    \end{figure}
    \begin{figure}[H]
        \centering
        \includegraphics[width=0.6\textwidth]{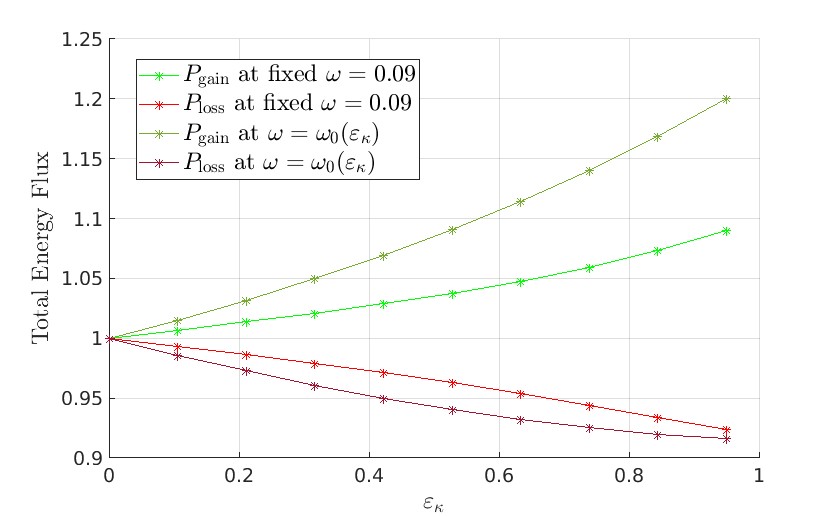}
        \caption{The maximal attainable energy gain and energy loss for two resonators each of length $\ell=0.1$ with the following material parameters: $\varepsilon_{\rho}=0,\,v_{\mathrm{r}}=v_0=1,\,\gamma=0.05,\,\xi=0.2,\,\phi_{\kappa}=\pi/2$.}\label{fig:Elossgain_kappaomega0}
    \end{figure}\par 
    \begin{figure}[H]
        \centering
        \includegraphics[width=0.6\textwidth]{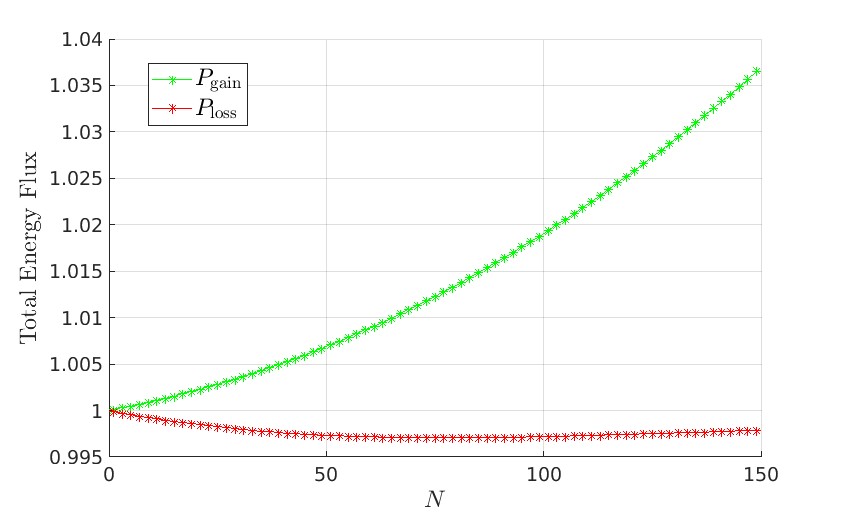}
        \caption{The maximal attainable energy gain and energy loss for $N$ resonators each of length $\ell=0.01$ with the following material parameters: $\varepsilon_{\rho}=0,\,\varepsilon_{\kappa}=0.8,\,\mu=0.2,\,v_{\mathrm{r}}=v_0=1,\,\gamma=1,\,\xi=0.2,\,\phi_{\kappa}=\pi/2$.}\label{fig:Elossgain_N}
    \end{figure}
    Next, we aim to understand how the number of resonators impacts the total energy flux. For that, we consider a system of $N$ resonators and compute the corresponding resonant frequencies for a given set of parameters. We note that for any $N$ with parameter choice as in Figure \ref{fig:Elossgain_N}, $\omega_0=0.02$ is always a resonant frequency. Therefore, we set $\omega=0.02$ in Figure \ref{fig:Elossgain_N}.\par 
    From Figure \ref{fig:Elossgain_N} we see that for increasingly many resonators, \textit{i.e.} as $N\to\infty$, the maximum possible energy gain grows continuously. This effect is further exemplified in the next section when determining the lasing points. We note that for the herein chosen fixed parameters, the maximal energy loss does not depend heavily on $N$.

\subsection{Lasing points}\label{sec:lasingpts}
    We now aim to identify the \textit{lasing points} corresponding to the capacitance matrix approximation formula \eqref{eq:new_CapApprox}. A lasing point corresponds to a set of parameter values at which lasing, \textit{i.e.} wave amplification by stimulated emission of radiation, occurs. Equivalently, a lasing point is given by a set of parameter values at which a resonance, characterised by \eqref{eq:new_CapApprox}, has vanishing imaginary part. The characterisation of the scattered field as derived in \cite[eq. (4.6)]{ammari2024scattering} makes it apparent that, since $\omega\in\mathbb{R}$, the scattered field diverges at a lasing point, hence, the total energy also diverges.\par 
    \begin{figure}[H]
        \begin{subfigure}{0.88\textwidth}
            \centering
            \includegraphics[width=0.98\textwidth]{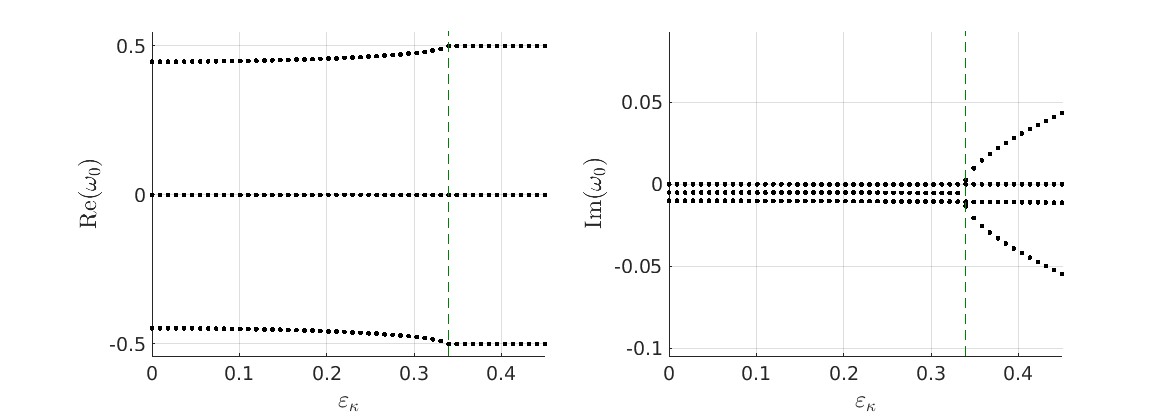}
            \caption{{\footnotesize The subwavelength resonant frequencies $\omega_0$ as a function of $\varepsilon_{\kappa}$. The green dashed line marks the lasing point.}}
            \label{fig:omega_lasing}
        \end{subfigure}
        \centering
        \begin{subfigure}{0.88\textwidth}
            \centering
            \includegraphics[width=0.98\textwidth]{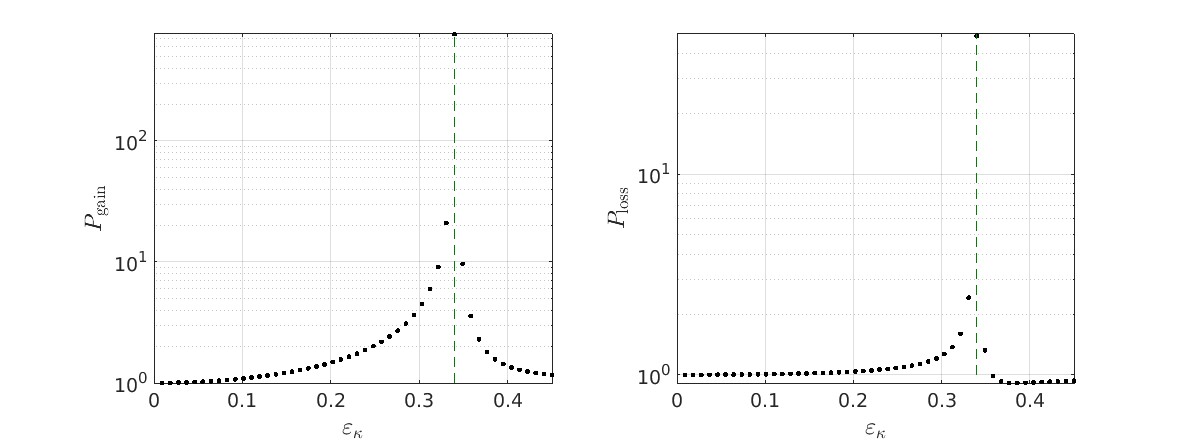}
            \caption{{\footnotesize The maximal possible energy gain and energy loss, where the green dashed line marks the lasing point. Note that we display our results here with a \texttt{semilogy} scale.}}
            \label{fig:E_lasing}
        \end{subfigure}
        \caption{A system of $N=2$ resonators each of length $\ell=0.1$ with $\delta=10^{-3},\,v_{\mathrm{r}}=v_0=1,\,\Omega=1,\,\varepsilon_{\rho}=0,\,\phi_{\kappa}=\pi/2$.}
        \label{fig:lasing}
    \end{figure}
    The results shown in Figure \ref{fig:lasing} clearly support this statement. For a given set of material parameters, we see in Figure \ref{fig:omega_lasing} that there exists a lasing point around $\varepsilon_{\kappa}\approx0.3233$ at which the total energy diverges, as shown in Figure \ref{fig:E_lasing}. In particular, $P_{\mathrm{loss}}>1$ which means that regardless of the incident wave field, it is impossible to achieve a dissipative system.

\section{Concluding remarks}\label{sec:Concl}
    By studying the total energy of a system of time-modulated subwavelength resonators upon a given incident wave field, this work presents a natural continuation of \cite{ammari2024scattering}, which investigates the scattered wave field of a time-modulated metamaterial. Using a scattering matrix approach, we were able to define the transmitted and reflected wave field linearly in terms of the given incident wave field. Moreover, we assumed the resonators to be small, \textit{i.e.} $\ell\to0$, to enable more explicit expressions. This allowed us to capture the effects of temporal modulation on the energy flux and quantify the transmission and reflection of energy in dynamic media.

    We provided a detailed derivation of the total energy flux and established an Optical Theorem adapted to time-dependent systems, distinguishing between energy gain, energy loss and energy conservation, see Theorem \ref{thm:Optical}. Crucially, we showed that time-modulated systems can be engineered to amplify or dissipate energy depending on the choice of the modulation parameters $\varepsilon_{\kappa}$ and $\varepsilon_{\rho}$, the operating frequency $\omega$ and incident wave field $\boldsymbol{\alpha}$. Our analysis identified the coupling between Fourier modes as the key mechanism enabling non-trivial energy dynamics.

    We defined the values $P_{\mathrm{gain}}$ and $P_{\mathrm{loss}}$ to be the maximal attainable energy gain and loss, respectively, for a fixed set of parameters where the only degree of freedom is the incident wave field's coefficients $\boldsymbol{\alpha}$. We demonstrated that maximal energy gain or loss occurs at operating frequencies near subwavelength resonant frequencies, and that these effects are significantly enhanced in systems with many resonators and strong time-modulation. Moreover, we numerically identified lasing points and showed that the total energy diverges at these points, as illustrated in Figure \ref{fig:lasing}. With the characterisation of lasing points, we introduced an effective design approach which may find applications, for example, in energy-harvesting structures.

    Besides the possibility of energy gain and loss, we proved analytically in Theorem \ref{thm:Econs_Geigen} and numerically in Figure \ref{fig:Geigenvals} that the energy is conserved if the incident wave field's coefficients are precisely an eigenvector of the matrix $\mathcal{G}$ defined in \eqref{def:asymptotic_Si_tdep}. Eigenstates of the scattering operator correspond to a frequency balance where no frequency conversion takes place, hence resulting in energy conservation.

    Our results contribute a rigorous and flexible mathematical toolset for studying temporally modulated resonator systems, and open up new possibilities for designing non-reciprocal, tunable or gain-enhanced wave-control devices in acoustic and photonic contexts.

\section*{Code availability}
	The codes that were used to generate the results presented in this paper are available under \url{https://github.com/rueffl/tmod_energy}.

	\addcontentsline{toc}{chapter}{References}
	\renewcommand{\bibname}{References}
	\bibliography{refs}
	\bibliographystyle{plain}

\appendix
\section{Calculations}\label{app:calc}
    We shall present the mathematical details behind the derivation of the cross term in the total energy flux $P^\mathrm{tot}$, besides $P^{\mathrm{tr}}$ and $P^{\mathrm{ref}}$ as defined by \eqref{def:ptr} and \eqref{def:pref}, respectively. Recall that the total energy flux is given by \eqref{def:Fstatic} and upon substituting $u=u^{\mathrm{in}}+u^{\mathrm{ref}}$ we obtain
    \begin{align}\label{eq:P_appendix}
        P(x,t)=-\mathrm{Re}\left(\partial_t \overline{u^{\mathrm{in}}}\partial_xu^{\mathrm{in}}\right)-\mathrm{Re}\left(\partial_t \overline{u^{\mathrm{ref}}}\partial_xu^{\mathrm{ref}}\right)-\mathrm{Re}\left(\partial_t \overline{u^{\mathrm{in}}}\partial_xu^{\mathrm{ref}}\right)-\mathrm{Re}\left(\partial_t \overline{u^{\mathrm{ref}}}\partial_xu^{\mathrm{in}}\right),
    \end{align}
    valid for $x<x_1^-$. The first two terms in the above expression lead to $P^{\mathrm{in}}$ and $P^{\mathrm{ref}}$ upon averaging over time, while the second two terms yield $P^\mathrm{cross}$. We show now that the last two terms vanish and we do not see any contribution in the total energy flux. We substitute the expressions \eqref{def:inc_wave} and \eqref{def:ref_wave_nonasympt} and use the following notation:
    \begin{align}
        \omega_n:=\omega+n\Omega,\quad A_n:=\sum\limits_{m=-\infty}^\infty\left(S_{21}\right)_{nm}a_m,
    \end{align}
    where $\left(a_n\right)_{n=-\infty}^{\infty}$ are the coefficients of the incident wave field.
    Then we arrive at
    \begin{align}
        -\mathrm{Re}&\left(\partial_t \overline{u^{\mathrm{in}}}\partial_xu^{\mathrm{ref}}\right)-\mathrm{Re}\left(\partial_t \overline{u^{\mathrm{ref}}}\partial_xu^{\mathrm{in}}\right)\nonumber\\&=\mathrm{Re}\left(\sum\limits_{n=-\infty}^\infty\overline{a_n}\omega_n\mathrm{e}^{-\mathrm{i}\left(k^{(n)}x+\omega_nt\right)}\sum\limits_{m=-\infty}^\infty A_mk^{(m)}\mathrm{e}^{\mathrm{i}\left(-k^{(m)}x+\omega_mt\right)}\right)\nonumber\\&\qquad-\mathrm{Re}\left(\sum\limits_{n=-\infty}^\infty\overline{A_n}\omega_n\mathrm{e}^{-\mathrm{i}\left(-k^{(n)}x+\omega_nt\right)}\sum\limits_{m=-\infty}^\infty a_mk^{(m)}\mathrm{e}^{\mathrm{i}\left(k^{(m)}x+\omega_mt\right)}\right)\nonumber\\
        &=\mathrm{Re}\left(\mathrm{e}^{-2\mathrm{i}\frac{\omega}{v}x}\left(\sum\limits_{n=-\infty}^\infty\overline{a_n}\omega_n\mathrm{e}^{\mathrm{i}\frac{n\Omega}{v}\left(-x-vt\right)}\sum\limits_{m=-\infty}^\infty A_mk^{(m)}\mathrm{e}^{-\mathrm{i}\frac{m\Omega}{v}\left(x-vt\right)}\right)\right)\nonumber\\
        &\qquad-\mathrm{Re}\left(\mathrm{e}^{2\mathrm{i}\frac{\omega}{v}x}\left(\sum\limits_{n=-\infty}^\infty\overline{A_n}\omega_n\mathrm{e}^{\mathrm{i}\frac{n\Omega}{v}\left(x-vt\right)}\sum\limits_{m=-\infty}^\infty a_mk^{(m)}\mathrm{e}^{-\mathrm{i}\frac{m\Omega}{v}\left(-x-vt\right)}\right)\right).
    \end{align}
    Next, we average the above expression over time for which we introduce the following notation:
    \begin{align}
        &C_{nm}:=\frac{1}{v}\omega_n \omega_m\left(\overline{a_n} A_m \mathrm{e}^{-\mathrm{i}\frac{(2\omega + (n+m)\Omega)}{v}x} - \overline{A_n} a_m \mathrm{e}^{\mathrm{i}\frac{(2\omega + (n+m)\Omega)}{v}x}\right),\nonumber\\
        &J_{nm}:=\frac{1}{T}\int_0^T\mathrm{e}^{-\mathrm{i}(n-m)\Omega t}\,\mathrm{d}t=\begin{cases}
            1,&n=m,\\
            0,&n\neq m.
        \end{cases}
    \end{align}
    Then we have
    \begin{align}
        \frac{1}{T}\int_0^T-\mathrm{Re}&\left(\partial_t \overline{u^{\mathrm{tr}}}\partial_xu^{\mathrm{ref}}\right)-\mathrm{Re}\left(\partial_t \overline{u^{\mathrm{ref}}}\partial_xu^{\mathrm{tr}}\right)\,\mathrm{d}t=\sum\limits_{n=-\infty}^\infty\sum\limits_{m=-\infty}^\infty\mathrm{Re}\left(C_{nm}J_{nm}\right).
    \end{align}
    We note that $C_{mn}=-\overline{C_{nm}}$ for all $n,m\in\mathbb{Z}$ and conclude that
    \begin{align}
        \sum\limits_{n=-\infty}^\infty\sum\limits_{m=-\infty}^\infty\mathrm{Re}\left(C_{nm}J_{nm}\right)=\sum\limits_{n=-\infty}^{\infty}\mathrm{Re}\left(C_{nn}\right).
    \end{align}
    However, since $C_{nn}$ is purely imaginary, we get $\mathrm{Re}\left(C_{nn}\right)=0$.\par 
    With that we have proven that the mixed terms in the energy flux vanish, and hence do not contribute to the total flux. Thus, the total energy flux is indeed composed of the incident,  reflected and transmitted energy fluxes.    
\end{document}